\title{(Sub)linear kernels for edge modification problems towards structured graph classes}
\author{Gabriel Bathie}{\'{E}cole Normale Sup\'{e}rieure de Lyon, France}{gabriel.bathie@ens-lyon.fr}{}{}
\author{Nicolas Bousquet}{LIRIS, CNRS, Universit\'e Claude Bernard Lyon 1, Université de Lyon, France}{nicolas.bousquet@univ-lyon1.fr}{}{}
\author{Théo Pierron}{LIRIS, CNRS, Universit\'e Claude Bernard Lyon 1, Université de Lyon, France}{theo.pierron@univ-lyon1.fr}{}{}
\titlerunning{(Sub)linear kernels for edge modification problems}
\authorrunning{G. Bathie, N. Bousquet and T. Pierron}
\keywords{kernelization, graph editing, split graphs, (sub)linear kernels}
\theoremstyle{plain}
\newtheorem{reductionrule}[theorem]{Rule} 
\theoremstyle{definition}
\newtheorem{problem}{Problem}
\newtheorem{openproblem}{Open Problem}
\newcommand{\staradd}{\textsc{Starforest Addition}}%
\newcommand{\stardel}{\textsc{Starforest Deletion}}%
\newcommand{\staredit}{\textsc{Starforest Edition}}%
\newcommand{\tpadd}{\textsc{Trivially Perfect Addition}}%
\newcommand{\tpdel}{\textsc{Trivially Perfect Deletion}}%
\newcommand{\tpedit}{\textsc{Trivially Perfect Edition}}%
\newcommand{\splitadd}{\textsc{Split Addition}}%
\newcommand{\splitdel}{\textsc{Split Deletion}}%
\newcommand{\splitedit}{\textsc{Split Edition}}%
\newcommand{\cliqueadd}{\textsc{Clique + IS Addition}}%
\newcommand{\cliquedel}{\textsc{Clique + IS Deletion}}%
\newcommand{\cliqueedit}{\textsc{Clique + IS Edition}}%
\newcommand{\clusteredit}{\textsc{Cluster Editing}}%
\newcommand{\clusterdel}{\textsc{Cluster Deletion}}%
\newcommand{\vc}{\textsc{Vertex Cover}}%
\newcommand{\lnr}{\textsc{Label-And-Reduce}}%
\newcommand{\NN}{\mathbb{N}}%
\newcommand{\Sbb}{\mathbb{S}}%
\newcommand{\Gg}{\mathcal{G}}%
\newcommand{\NP}{\textsf{NP}}%
\DeclareMathOperator*{\poly}{poly}
\begin{document}

\maketitle

\begin{abstract}
In a (parameterized) graph edge modification problem, we are given a graph $G$, an integer $k$
and a (usually well-structured) class of graphs $\mathcal{G}$, and ask whether it is possible
to transform $G$ into a graph $G' \in \mathcal{G}$ by adding and/or removing
at most $k$ edges.
Parameterized graph edge modification problems received considerable attention in the last decades.

In this paper, we focus on finding small kernels for edge modification problems. One of the most studied problems 
is the \textsc{Cluster Editing} problem, in which the goal is to partition the vertex set into a disjoint union of cliques. 
Even if this problem admits a $2k$ kernel~\cite{cao2012cluster}, this kernel does not reduce the size of most instances. 
Therefore, we explore the question of whether linear kernels are a theoretical limit in edge modification problems, 
in particular when the target graphs are very structured (such as a partition into cliques for instance).
We prove, as far as we know, the first sublinear kernel for an edge modification problem. 
Namely, we show that \textsc{Clique + Independent Set Deletion}, which is a restriction of \textsc{Cluster Deletion}, 
admits a kernel of size $O(k/\log k)$. 

We also obtain small kernels for several other edge modification problems.
We prove that \textsc{Split Addition} (and the equivalent \textsc{Split Deletion}) admits a linear kernel, improving the existing quadratic kernel of Ghosh et al.~\cite{ghosh2015faster}. 
We complement this result by proving that \textsc{Trivially Perfect Addition}
admits a quadratic kernel (improving the cubic kernel of Guo~\cite{guo2007problem}), 
and finally prove that its triangle-free version (\textsc{Starforest Deletion}) 
admits a linear kernel, which is optimal under ETH.
\end{abstract}

\newpage

\section{Introduction}

A central problem in the context of data transmission, collection or storage,
is to recover the original information when the data has been altered.
Although it is not possible to know what the original data was in the general setting, 
it may be possible when we have some knowledge of the structure of the original data.
When we know that the alteration is limited, it is reasonable to assume 
that the original data is an element that has the desired structure and is the closest
to the altered data.

When the data that we are reconstructing is a graph, the problem becomes the following:
given a graph $G$ (the altered data) and a class of graphs $\Gg$ (the structure of the data), find
the graph in $\Gg$ that is the ``closest'' to $G$ (candidate for the original data).
There are multiple ways to define the distance between two graphs, but the most widely used
is the minimum number of vertex modifications or edge modifications needed to turn one into the other.
This type of problem, called graph modification problems, received considerable attention, 
for instance in computational biology~\cite{ben1999clustering}, machine learning~\cite{bansal2004correlation},
and image processing~\cite{wu1993optimal}.

In this work, we focus on edge modification problems, i.e. the distance is the minimum number of edge modifications (see Section~\ref{sec:prelim} for formal definitions of these problems).
A line of work, initiated by Yannakakis~\cite{yannakakis1978node}, showed that deciding 
whether a graph $G$ is at distance at most $k$ from $\Gg$ is \NP-complete for most classes of graphs,
even for very restricted classes such as bipartite graphs.
See~\cite{burzyn2006np,mancini2008graph,natanzon2001complexity} for an overview of the different results.

Therefore, in the last decades, edge modification problems received considerable attention from the point of view of parameterized complexity, 
which studies the resources required to solve \NP-complete problems in a fine-grained way.
See for example~\cite{bliznets2016lower,bocker2012golden,cai1996fixed,damaschke2014editing,drange2015fast,ghosh2015faster,van2018parameterizing}, and see~\cite{crespelle2020survey} for a recent survey on the topic.
In this paper, we will consider problems parameterized by the size $k$ of the solution (that is, the set of edges to add or remove).

In this work, we focus on graph classes that can be characterized by a finite number of forbidden induced subgraphs. 
In his seminal paper, Cai~\cite{cai1996fixed} showed that, for every class of graphs $\Gg$ 
that can be characterized by a finite number of forbidden induced subgraphs, 
the $\Gg$-edge modifications problems are FPT parameterized by $k$. 
In other words, there exists a constant $c$, a function $f$ (that only depend on $\Gg$),
and an algorithm running in time $f(k) \cdot n^c$ that either finds a solution of size at most $k$, or returns that there is no such solution.
Therefore, most of the subsequent efforts focused
on determining for which $\Gg$ these problems admit a polynomial kernel.
Intuitively, a kernel is a polynomial-time preprocessing algorithm that extracts the ``hard'' part of an instance $(G,k)$: it solves easy parts of the instance and returns an equivalent instance $(G',k')$, whose size is bounded by $f(k)$, for some function $f$.
A kernel is a polynomial kernel if $f$ is a polynomial. 
The interested reader is referred to~\cite{fomin2019kernelization} for more details.

One of the most studied edge modification problem is 
\textsc{Cluster Editing}, in which the goal is to partition the graph into a disjoint union of cliques. 
This problem is known to admit a kernel with at most $2k$ vertices.
While this result seems impressive at first glance (for many parameterized problems, a linear kernel is asymptotically optimal), 
we can remark that here, we are comparing the number of vertices of the kernel with the number of edges in the solution.
It turns out that for most graphs in practice, the number of edges that have to be modified to obtain a cluster graph is larger than the number of vertices. 
For example, it is the case for most of the
public instances of the PACE challenge 2021 on cluster~editing\footnote{See \url{https://pacechallenge.org/2021/} for more information.}. 

This raises the question of whether linear kernels are optimal, in particular for \clusteredit{}.
We partially answer this question by giving a sublinear kernel for the closely related \cliquedel{} problem. It provides a ``proof of concept'' that linear kernels are not always optimal. As far as we know, it is the first example of a sublinear kernel for a graph edge modification problem. We complete this result with linear or quadratic kernels for several other edge modification problems.

Due to space constraints, all the proofs that are not in this extended abstract can be found in the appendix.

\subparagraph*{Our results.}

In this work, our goal is to understand when it is possible to obtain small, 
and in particular linear or sublinear kernels for edge modification problems. 
We focus in particular on graph classes where the vertex set can be partitioned into highly structured classes such as cliques or independent sets.
A typical example of such graph class is the class of split graphs, i.e. graphs that can
be partitioned into a clique and an independent set.

Most of our results are based on a high-level technique, that we call \lnr{}, 
which helps to design efficient kernelization algorithms for edge modification problems.
The key idea is to use the strong structure of each of the graphs in $\Gg$ to find a highly structured
partition $X_1, \dots, X_\ell$ of the graphs of $\Gg$ 
(e.g. a partition in cliques or independent sets, complete bipartition between subsets...).
We then define rules that label vertices $x$ as belonging to $X_i$, 
in such a way that if there is a solution, there is one for which $x \in X_i$.
We finally show that 1) when no rule can be applied, the number of unlabeled vertices is $O(\poly(k))$ when 
$(G,k)$ is a positive instance and 2) the number of labeled vertices in each class $X_i$
can be reduced to $O(\poly(k))$.

\subparagraph*{Sublinear kernel for Clique+Independent Set deletion.}

The problems of graph edge modification towards cluster graphs have received considerable attention in the last
two decades in parameterized complexity, see e.g.~\cite{bocker2012golden,cao2012cluster,chen20122k, fomin2014tight, van2018parameterizing}. 

The deletion version, 
\clusterdel{} admits a cubic kernel~\cite{gramm2005graph}. We focus on a restricted version of this problem, where all clusters but at most one have size $1$. It corresponds to graphs that are the disjoint union of a clique and an independent set. In what follows, we will refer to this class as the class of \emph{clique + IS graphs}, 
and to the corresponding problem as \cliquedel{}. 
Since clique + IS graphs are $(P_3, 2K_2)$-free graphs, \cliquedel{} is FPT by~\cite{cai1996fixed}.

While \cliqueadd{} is trivial, both \cliquedel{} and \cliqueedit{} are \NP-complete 
(reduction from the \textsc{Clique} problem), 
and both can be solved in subexponential time 
($O^*(1.64^{\sqrt{k \ln k}})$ and $O^*(2^{\sqrt{k \ln k}})$ respectively\footnote{Recall that $O^*$ denotes the complexity up to polynomial factors.})~\cite{damaschke2014editing}.
Both problems also admit a simple $2k$-kernel,
based on twin reduction rules~\cite{crespelle2020survey}.

Our result, proved in Section~\ref{sec:cliqueis}, is the following.
\begin{restatable}[]{theorem}{cliqueis}\label{thm:clique_is}
	\cliquedel{} admits a kernel of size $2k/\log k + 1$.
\end{restatable}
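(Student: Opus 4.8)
The plan is to start from the standard $2k$-kernel for \cliquedel{} and compress it further by observing that, in the reduced instance, many vertices must play essentially interchangeable roles, and that a solution of size $k$ can only "afford" a limited amount of structure. First I would recall the twin-based reduction: a vertex of the independent-set part is never adjacent to anything in the solution graph, so any two false twins (non-adjacent vertices with the same neighborhood) that lie outside the clique are redundant, and any two true twins inside the clique can be merged while tracking a multiplicity. After exhaustively applying twin reduction, the instance has at most $2k$ vertices; call this reduced graph $G$. The key new idea is to encode the remaining instance more succinctly by grouping vertices according to a structural signature rather than storing them individually.

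The main step is a counting/encoding argument. In a yes-instance, the final clique $C$ has the property that every vertex of $G$ is either entirely inside $C$ or entirely outside, and the total number of edited edges is at most $k$. A vertex $v \notin C$ that has $d$ neighbors in $C$ forces $d$ deletions, and a vertex $v \in C$ missing $d$ edges to $C$ forces $d$ additions; so the multiset of "defect degrees" of the vertices, summed appropriately, is at most $2k$. I would argue that the relevant data determining whether $(G,k)$ is a yes-instance can be captured by, for each vertex, its defect degree (a number between $0$ and roughly $\sqrt{k}$, since a vertex with defect degree $d$ already costs $\Omega(d)$ and there can be at most $O(k/d)$ such vertices with large defect), plus bounded adjacency information among the $O(\sqrt k)$ "expensive" vertices. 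Packing this information shows that the instance is equivalent to one describable with $O(k/\log k)$ bits, hence — after the standard trick of blowing a succinct description back up into an honest graph instance whose size is within a constant factor — to a \cliquedel{} instance on $O(k/\log k)$ vertices, with the constant tuned to $2k/\log k + 1$.

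Concretely I would: (1) prove the twin reduction rules are safe and give a $2k$-vertex instance; (2) show that in any size-$k$ solution the number of vertices with nonzero defect is $O(\sqrt k)$ and that cheap (defect-$0$) vertices come in only a bounded number of "types" determined by their adjacency to the expensive vertices; (3) observe there are at most $2^{O(\sqrt k)} = k^{o(1)}$ such types, so the instance is determined by the list of types together with a multiplicity vector whose entries sum to at most $2k$; (4) encode this list in $O(k/\log k)$ machine words and reconstruct, in polynomial time, an equivalent \cliquedel{} instance of the claimed size, using the fact that a graph on $m$ vertices carries $\Theta(m \log m)$ bits so $O(k/\log k)$ vertices suffice to store $O(k)$ bits.

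The hard part will be step (2)–(3): making precise the claim that the "cheap" vertices only matter through a small amount of aggregated data, so that two instances with the same type-and-multiplicity profile are genuinely equivalent as \cliquedel{} instances (not merely equivalent for one particular target clique). This requires arguing that an optimal clique can always be chosen to respect the type partition — i.e. it never splits a type — which follows from an exchange argument analogous to twin reduction but applied at the level of types rather than individual twins. Once that equivalence is nailed down, the bit-counting and the final packing into $2k/\log k + 1$ vertices are routine.
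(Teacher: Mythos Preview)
Your proposal has several genuine gaps. First, this is a \emph{deletion} problem, so no edges are ever added: the target clique $C$ must already be a clique in $G$, and a vertex $v\in C$ can never be ``missing'' edges to $C$. Second, step~(2) is false as stated: the number of vertices with nonzero defect can be $\Theta(k)$, not $O(\sqrt{k})$. Take a large clique with $k$ pendant vertices attached to $k$ distinct clique vertices; twin reduction does nothing here, yet each pendant has defect~$1$. Third, step~(3) contains an arithmetic error: $2^{O(\sqrt{k})}$ is super-polynomial in $k$, not $k^{o(1)}$, so even if there were only $O(\sqrt{k})$ ``expensive'' vertices, the number of adjacency types would be far too large to enumerate. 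Finally, step~(4) is not a kernelization argument: a kernel must be an instance of the \emph{same} problem, and you give no mechanism to turn an abstract bit-string back into an equivalent \cliquedel{} instance; the claim that ``a graph on $m$ vertices carries $\Theta(m\log m)$ bits'' is also wrong (it carries $\Theta(m^2)$ bits via its adjacency matrix).

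The paper's proof is entirely different and uses neither twins nor encodings. It relies on three degree-based rules: (i) a vertex of degree below $\sqrt{2(m-k)}-1$ cannot lie in the clique of any solution, since that clique has at least $\sqrt{2(m-k)}$ vertices, so it is deleted and $k$ decreased by its degree; (ii) for any vertex $v$ of degree at most $2\log k$, one can test in polynomial time whether some solution puts $v$ in its clique (there are at most $2^{|N[v]|}\leq k^{O(1)}$ candidate cliques through $v$), and if not, $v$ is deleted; (iii) if $\delta(G)\geq k/(2\log k)$, the independent-set part of any solution has at most $4\log k$ vertices, so the instance is solved directly via a small vertex cover of $\bar{G}$. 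When no rule applies to a positive instance, combining (i) and (iii) forces $m=O(k^2/\log^2 k)$, hence the clique has at most $k/\log k+1$ vertices; and since every remaining vertex has degree at least $2\log k$ by rule~(ii), at most $k/\log k$ vertices can be made isolated with $k$ deletions. This yields the $2k/\log k+1$ bound.
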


Our algorithm uses the structure of clique+IS graphs to remove vertices with small degree 
and to reduce the instance when the minimum degree of the input is large. 
Theorem~\ref{thm:clique_is} is, as far as we know, the first sublinear kernel for edge modification problems. We conjecture that the size of this kernel is not optimal, and ask the following:

\begin{openproblem}
  Is there an $O(k^{1-\varepsilon})$ kernel for \cliquedel{} for some $\varepsilon > 0$?
\end{openproblem}

Moreover, it is plausible that other edge modification problems towards highly structured classes also admit a sublinear kernel.
A natural candidate is the closely related \clusteredit{} problem, which already admits a $2k$ kernel~\cite{cao2012cluster,chen20122k}.

\begin{openproblem}
	Does \clusteredit{} (resp. \clusterdel{}) admit a sublinear kernel? 
\end{openproblem} 

\subparagraph*{Linear kernel for Split addition.}

Split graphs are graphs whose vertex set can be partitioned into a clique $K$ and an independent set $I$ 
(with no constraint on the set of edges between $K$ and $I$).
Since split graphs are auto-complementary, the \splitadd{} and \splitdel{} problems are equivalent.
Natanzon et al. showed that these two problems are \NP-complete~\cite{natanzon2001complexity}.
Since split graphs are $(2K_2,P_4,C_5)$-free graphs, the latter problems are FPT~\cite{cai1996fixed}.
Ghosh et al.~\cite{ghosh2015faster} later showed that these problems can be solved in subexponential $O^*(2^{O(\sqrt{k}\log k)})$ time
and that they admit a quadratic kernel.
Cygan et al.~\cite{cygan2015parameterized} improved the complexity to $O^*(2^{\sqrt{k}})$. 
Hammer and Simeone~\cite{hammer1981splittance} showed that, rather surprisingly, the related \splitedit{} problem
can be decided in polynomial time.

We improve upon the result of Ghosh et al.~\cite{ghosh2015faster} by showing that the \splitadd{} (and therefore \splitdel{}) problem admits a linear kernel in Section~\ref{sec:split}. 

\begin{restatable}[]{theorem}{split}\label{thm:split}
\splitadd{} and \splitdel{} admit a kernel with at most $11k + 6\sqrt{2k}+ 4$ vertices.
\end{restatable}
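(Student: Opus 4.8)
The plan is to apply the \lnr{} strategy to the bipartition $(K,I)$ of a target split graph, where $K$ is a clique and $I$ an independent set. Since split graphs are $(2K_2,P_4,C_5)$-free, the first step is to invoke the classical structure: if $(G,k)$ is a yes-instance for \splitadd{}, then $G$ itself has \emph{splittance} at most $k$, and by Hammer--Simeone the splittance can be computed in polynomial time by sorting the degree sequence. So I would first compute an optimal split partition $(K_0, I_0)$ of $G$ and the splittance $s$; if $s > k$ we reject. Otherwise we aim to label most vertices as ``clique'' or ``independent.''

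The key labelling rules I would set up are of the following flavour. A vertex $v$ is labelled $K$ if it must lie in the clique side of \emph{every} cheap split partition — for instance, if $v$ has at least $k+1$ neighbours that are themselves forced to be in $K$, then $v$ cannot be in $I$ (it would need $k+1$ edge deletions to make it non-adjacent to the clique, which for \splitdel{} is too expensive, and dually for \splitadd{}). Symmetrically, $v$ is labelled $I$ if it has at least $k+1$ labelled-$K$ neighbours missing, i.e. $k+1$ clique vertices it is \emph{not} adjacent to. Running these rules to a fixpoint yields disjoint sets $\widehat K \subseteq K$ and $\widehat I \subseteq I$ of forced vertices, together with a set $U$ of still-unlabelled vertices. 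I would then argue, using the $2K_2$-freeness and a counting argument on the edges incident to $U$: each unlabelled vertex ``costs'' at least one unit of the budget (either an edge it must gain/lose, or it participates in an obstruction), so $|U| = O(k)$ — more precisely one obtains a bound like $|U| \le c_1 k + c_2\sqrt{k}$, the $\sqrt{k}$ term coming from the fact that a clique on $t$ forced vertices already uses roughly $t^2/2$ of nothing but needs at most $k$ \emph{additions}, bounding the ``defect'' of $\widehat K$ from being a clique and $\widehat I$ from being independent by $k$, hence the deficiency graph inside these parts has $O(k)$ edges and $O(\sqrt k)$ high-degree vertices.

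The second half is to shrink the large forced parts. Inside $\widehat K$: two vertices of $\widehat K$ with the same neighbourhood outside $\widehat K$ are \emph{twins}, and standard twin-reduction (keep at most $k+1$ vertices per twin class, or merge with multiplicity) applies, because a cheap solution treats twins uniformly; since there are only $|U| + |\widehat I|$ ``types'' of external neighbourhood to distinguish and the part is almost a clique, the number of surviving vertices of $\widehat K$ is $O(k)$. The symmetric rule applies inside $\widehat I$. Putting the three bounds together — $|U| = O(k+\sqrt k)$, $|\widehat K| = O(k)$ after reduction, $|\widehat I| = O(k)$ after reduction — and then carefully tracking constants through the degree-sequence bound of Hammer--Simeone (the splittance equals $\sum_{i\le m}(i-1-d_i) + \sum_{i>m}(d_i - m + 1)$ for a suitable $m$, where $d_1\ge\cdots\ge d_n$) gives the explicit figure $11k + 6\sqrt{2k} + 4$. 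The $\sqrt{2k}$ appears precisely because a budget of $k$ additions inside the clique side forces the number of ``deficient'' clique vertices to be at most $\sqrt{2k}$ (as $\binom{t}{2}\le k$ forces $t\le \sqrt{2k}+O(1)$), and similarly on the independent side.

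The main obstacle, I expect, is \textbf{the interaction between the two sides}: a vertex near the boundary could plausibly go to either $K$ or $I$, and the labelling rules above only catch the ``clearly forced'' vertices. The delicate part is showing that the \emph{unforced} vertices $U$ are genuinely few — this requires exploiting that $G$ is close to split (splittance $\le k$) to show the degree sequence is ``almost threshold,'' so that only $O(k)$ vertices have degree in the ambiguous middle range, and combining this with the $2K_2$-free obstruction structure to handle the edges between $U$ and the forced parts. Getting the constants down to $11k+6\sqrt{2k}+4$ rather than a larger $O(k)$ bound is then a matter of choosing the labelling thresholds tightly (using $k$ rather than $k+1$ where the direction of modification permits) and proving a matching sum-over-degree-sequence inequality; I would handle that bookkeeping last, once the qualitative $O(k+\sqrt k)$ bound is secure.
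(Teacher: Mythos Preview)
Your high-level scaffolding (label vertices as clique/independent, bound the unlabelled set, then shrink the labelled parts) matches the paper's \lnr{} strategy, but the three concrete steps you propose each have a genuine gap.

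\textbf{Labelling.} Your rules are circular: ``label $v$ as $K$ if it has $k{+}1$ neighbours already forced in $K$'' needs a seed, and you never explain how any vertex gets a first label. The paper's crucial $K$-rules are non-circular: move $v$ to $K$ if $N(v)$ contains at least $k{+}1$ non-edges (so $v$ cannot sit in the independent side, since then $N(v)\subseteq K^*$ would need $>k$ additions), or if $v$ dominates $K\cup D$. This domination rule is the one that actually drives the size bound on $D$, and nothing in your proposal plays its role.

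\textbf{Bounding the unlabelled set.} Your sentence ``each unlabelled vertex costs at least one unit of the budget'' is where the real work hides, and it is not true as stated: unlabelled vertices on the independent side need not be incident to any edge of the solution. The paper's argument is substantially more delicate. First one shows $|K^D|\le 4k$ by proving that if $K^D$ were larger, some vertex of $K^D$ would dominate all of $K\cup D$ (via an iterative argument that repeatedly enlarges $I_v=N(v)\cap I^D$), contradicting the domination rule. Then one bounds $|I^D|$ by splitting it into high-degree and low-degree parts relative to $K^D$ and double-counting; the $\sqrt{2k}$ term enters here because each $v\in K^D$ has $|I_v|\le\sqrt{2k}+1$ (from the ``$k{+}1$ non-edges in $N(v)$'' rule). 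None of this follows from splittance or from $2K_2$-freeness in the way you sketch.

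\textbf{Shrinking the labelled parts.} Twin reduction does not give a linear bound: the number of possible neighbourhoods in $U\cup\widehat I$ is $2^{|U|+|\widehat I|}$, not $|U|+|\widehat I|$. The paper instead \emph{replaces} $K$ by a fresh clique $K'$ of exactly $k$ vertices, wiring each $v\in D$ to $K'$ so as to preserve only the \emph{number} of non-neighbours $v$ had in $K$; and replaces $I$ by a fresh independent set of size $\sqrt{2k}$ complete to $K'$. Proving this replacement safe is a separate lemma. This is where the final count $|D|+k+\sqrt{2k}\le 11k+6\sqrt{2k}+4$ comes from, not from Hammer--Simeone's degree-sequence formula.
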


This result is the main technical contribution of the paper. 
From a very high-level perspective, our algorithm works as follows.
Let $(G,k)$ be a positive instance. If the clique of the solution is large enough, the neighborhood of many vertices of that clique has not been modified and we show that we can detect some of them and label them as clique vertices. Since the number of unlabeled clique vertices of the solution is bounded by a linear function, we can prove via a tricky and short argument that the number of unlabeled vertices of the independent set can be bounded. 
While the reduction rules are not very complicated, showing that the answer is negative when the number of unlabeled vertices is too large is the core of the proof.
We finally show that we reduce the number of labeled vertices to $O(k)$ vertices.

\subparagraph*{Quadratic kernel for trivially perfect graphs.}

A trivially perfect graph is a graph such that for any pair	of adjacent vertices $u,v$, $N(u) \subseteq N(v)$ or $N(v) \subseteq N(u)$. 
The class of trivially perfect graphs can equivalently be characterized as the class of $(P_4,C_4)$-free graphs.
Drange et al.~\cite{drange2015exploring,drange2018polynomial} showed that, under the Exponential Time Hypothesis (ETH), \tpdel{} and \tpedit{} cannot be solved in subexponential time. 
Liu et al.~\cite{liu2015edge} gave an FPT algorithm for \tpdel{} running in time $O^*(2.42^{k})$.
On the other hand, the \tpadd{} problem does not admit such lower bounds. Drange et al.~\cite{drange2015exploring} designed a subexponential $O(2^{\sqrt{k \log k}}))$ 
algorithm for the problem and Bliznets et al.~\cite{bliznets2016lower} 
showed that assuming ETH, this cannot be improved beyond $O(2^{k^{1/4}})$.
In 2018, Drange and Pilipczuk~\cite{drange2018polynomial} 
showed that the three problems admit a polynomial kernel
of size $O(k^7)$, recently improved by Dumas et al.~\cite{DumasPT21+} into $O(k^3)$. 

In the specific case of \tpadd{}, a cubic kernel was already provided by Guo~\cite{guo2007problem}. 
We improve this result in Section~\ref{sec:trivially} by showing the following.

\begin{restatable}[]{theorem}{tpaddthm}\label{thm:tpadd}
	\tpadd{} admits a kernel with $2k^2+2k$ vertices.
\end{restatable}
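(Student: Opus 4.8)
The plan is to follow the \lnr{} template adapted to trivially perfect graphs. Recall that a graph is trivially perfect if and only if it is $(P_4, C_4)$-free, and equivalently, every connected trivially perfect graph has a universal vertex and each connected component, after deleting it, is again trivially perfect; this gives a rooted-forest (``tree-like'') structure where vertices are organized into a forest of nested neighborhoods. For \tpadd{}, the target graph $G'$ is obtained by adding edges, so $G$ is a spanning subgraph of some trivially perfect $G'$ with at most $k$ added edges. I would first isolate the set $A$ of endpoints of added edges: $|A| \le 2k$. Outside $A$, the graph is essentially untouched, so the plan is to argue that all but $O(k^2)$ vertices lie in structured, easily-reducible configurations.

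First I would set up reduction rules handling the ``trivial'' parts. A connected component of $G$ that is already trivially perfect and is a module with respect to the rest (in particular, any isolated clique, or more generally a component that cannot interact with an added edge) can either be removed or shrunk: if such a component has no vertex in $A$, adding edges inside or incident to it is never forced, so we may delete it (or replace a large ``twin class'' by a bounded-size one). The key structural claim is then: if $(G,k)$ is a yes-instance, consider a solution $G'$ and its tree-like decomposition; the ``branching'' vertices and the vertices whose neighborhood was modified all lie within bounded distance of $A$ in this decomposition, and there are at most $2k$ of them at the leaves of relevance. Each of the $O(k)$ relevant root-paths in the decomposition can be long, but a long path of degree-one-type attachments corresponds to a long induced path or a large set of true/false twins in $G$, which we bound: twin classes larger than $k+1$ can be truncated to size $k+1$ (a standard twin rule, since a solution never needs to split more than $k$ twins apart), contributing $O(k)$ vertices per class and $O(k)$ classes, hence $O(k^2)$ total.

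The main obstacle, and where I expect the real work to be, is proving the bound on the number of distinct twin classes (equivalently, the number of ``nodes'' in the decomposition of $G'$ that carry surviving vertices). The argument should be: each edge addition can ``merge'' or ``reorganize'' at most a constant number of twin classes, so starting from the untouched instance and working towards $G'$, only $O(k)$ classes can be non-trivial; every other class must already be a maximal twin class in $G$ that stays intact in $G'$, and a component made entirely of such intact classes with no incident added edge is deletable by the earlier rule. Making ``reorganize at most a constant number of classes'' precise requires a careful look at how a single added edge $uv$ affects the $(P_4,C_4)$-free structure — it can create at most a bounded number of new ``levels'' in the rooted forest near $u$ and $v$ — and this counting, combined with the twin-truncation, is what yields the $2k^2 + 2k$ bound; the constant $2$ in front of $k^2$ should come from: at most $2k$ relevant classes (two per added edge), each truncated to $k+1$ vertices, giving $2k(k+1) = 2k^2 + 2k$. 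After the rules are exhausted, a vertex-count argument shows that a yes-instance has at most $2k^2 + 2k$ vertices; if more remain, we output a trivial no-instance.

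Finally I would verify safeness of each rule (each preserves the answer and runs in polynomial time) and check that exhaustive application terminates in polynomial time, so the whole procedure is a genuine kernelization; the size bound then follows from the structural claim above.
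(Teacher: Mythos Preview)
Your route is quite different from the paper's, and the paper's argument is much shorter. The paper never touches the tree model or twin classes. It works directly with obstructions: call a non-edge $\{u,v\}$ a \emph{diagonal} if some $a,b$ make $u\,a\,v\,b$ an induced $P_4$ or $C_4$ (so every obstruction has exactly two diagonals). Rule~1 is a sunflower rule: if a diagonal lies in at least $k+1$ obstructions, add it and decrement $k$ (safe because otherwise the \emph{other} diagonal of each of those $k+1$ obstructions would have to be in the solution). Rule~2 is Guo's modulator rule: delete every vertex lying in no obstruction. The size bound is then a two-line double count: for any solution $F$, each $e\in F$ is a diagonal of at most $k$ obstructions (by Rule~1), so at most $2k+2$ vertices lie in an obstruction having $e$ as a diagonal; since every surviving vertex is in some obstruction and every obstruction has a diagonal in $F$, one gets $|V|\le |F|\cdot(2k+2)\le 2k^2+2k$.

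Your proposal has a genuine gap precisely where you flag it. The assertion ``at most $2k$ relevant twin classes, two per added edge'' is not established. Knowing $|A|\le 2k$ bounds the number of \emph{vertices} touched by the solution, not the number of twin classes of $G$ among the untouched vertices: two vertices $u,v\notin A$ are twins in $G$ iff they are twins in $G+F$, and the tree model of a trivially perfect $G+F$ can have linearly many bags, so without a further argument there is no $O(k)$ bound on the number of surviving classes. The suggested mechanism (``a single edge addition reorganizes only a constant number of classes'') would at best relate the class count of $G$ to that of $G+F$; it does not by itself bound either. Also, ``remove trivially perfect components'' is strictly weaker than Guo's modulator rule: a vertex can live in a non-TP component yet belong to no $P_4$ or $C_4$, and such vertices are exactly what the paper's rule strips away but yours does not. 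If you want to push the twin-based approach through to $2k^2+2k$, you will need an additional rule or a genuinely new counting lemma; the paper's obstruction-based count sidesteps all of this.
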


Our kernel is based on the claim of Guo~\cite{guo2007problem}, 
which states that the instance can be reduced to vertices that belong
to at least one obstruction (that is, an induced $P_4$ or $C_4$).
Using this claim, Guo proved the existence of a cubic kernel.
By counting obstructions more precisely, we actually show very simply that the size of the kernel can be reduced to $O(k^2)$.  

\subparagraph*{Linear kernelization of starforests}

We finally focus on triangle-free trivially perfect graphs, also known as star forests.
A \emph{star} is a tree with at most one internal vertex.
A star with $n$ vertices is called an \emph{$n$-star}. 
Note that the single vertex graph and $K_2$ are stars.
The class of \emph{starforests} graphs is the class of 
graphs that are a disjoint union of stars, that is every connected component is a star.
Alternatively, it may be defined as the class of $K_3,C_4,P_4$-free graphs.

One can remark that removing an edge from a starforest yields another starforest,
hence it is never interesting to add edges to obtain a star forest. 
Therefore, \staradd{} is trivial, and \staredit{} is equivalent to \stardel{}.  
Drange et al. showed in~\cite{drange2015fast} that \stardel{}
is \NP-complete and cannot be solved in subexponential time (that is in time $O(2^{o(k)}\poly(n))$), assuming the ETH~\cite{impagliazzo2001complexity}. 

In Section~\ref{sec:star}, we prove the following result.

\begin{restatable}[]{theorem}{star}\label{thm:star}
	\stardel{} admits a kernel with at most $4k + 2$ vertices.
\end{restatable}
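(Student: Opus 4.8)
The plan is to kernelise by exhaustively applying a handful of reduction rules and then prove that a reduced positive instance has at most $4k+2$ vertices; a detected negative instance is replaced by a fixed small no-instance.

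I would start from the characterisation: a graph is a star forest iff its vertices of degree at least $2$ form an independent set, equivalently iff every edge has an endpoint of degree $1$. The reduction rules are: \textbf{(R1)} delete isolated vertices; \textbf{(R2)} delete a connected component that is already a star (it costs nothing), and declare a no-instance if more than $k$ components remain (or $k<0$); \textbf{(R3)} if a vertex $v$ has two neighbours $u_1,u_2$ of degree $1$, delete $u_2$; \textbf{(R4)} if $G$ has an induced path $u-w_1-w_2-w_3-u'$ with $\deg w_1=\deg w_2=\deg w_3=2$, delete $w_1,w_2,w_3$, add the edge $uu'$, and set $k:=k-1$. Soundness of (R3) is an exchange argument: from a solution of $(G-u_2,k)$ re-attach $u_2$ to $v$; the only obstruction is when $v$ is a leaf of a star of size $\ge 3$ in the target forest, but then $vu_1$ was deleted and one replaces that deletion by deleting the edge from $v$ to its centre, making $v$ the centre of $\{u_1,v,u_2\}$. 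Soundness of (R4) is the statement that cutting such a degree-$2$ chain into stars costs exactly one more deletion than after the contraction, which follows from a short case analysis on whether $u$ and $u'$ end up as centres or as leaves. After (R1)--(R4), every path component is a $P_4$ and $G$ has no three consecutive vertices of degree $2$.

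For the size bound, take a reduced positive instance with a solution $F$, $|F|\le k$, and set $H=G-F$. Split $V=L\sqcup A$, where $L$ and $A$ are the vertices of degree $1$ and of degree $\ge 2$ in $G$. By (R2) no edge has both ends in $L$, so every vertex of $L$ is a pendant attached to a vertex of $A$; by (R3) each vertex of $A$ gets at most one such pendant, hence $|L|\le|A|$ and it suffices to show $|A|\le 2k+1$. In the star forest $H$ the vertices of degree $\ge 2$ form an independent set $I$ of centres; write $A=I\sqcup D$. Counting leaves of the stars of $H$: a size-$\ge 3$ star has at most one leaf in $L$, so its other leaves lie in $D$, giving $\sum_{v\in I}(\deg_H v-1)\le|D|$ and, since $\deg_H v\ge 2$ for $v\in I$, $|I|\le|D|$. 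Every vertex of $D$ has strictly smaller degree in $H$ than in $G$, hence is incident to a deleted edge; a careful charging of $D$ against the edges of $F$ — where (R4) is precisely what prevents the cheap, degree-$2$ members of $D$ (those lying on edges between two degree-$2$ vertices) from accumulating, and (R2) lets one bound the number of components of $H$ — gives $|D|\le k+O(1)$, and after optimising the constants $|A|\le 2k+1$, hence $|V|\le 4k+2$.

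The main obstacle is this last charging step: obtaining the exact constant $4k+2$ rather than merely an $O(k)$ bound forces (R4) into exactly the right shape and needs a tight accounting of $A$ against both the deleted edges and the star structure of $H$. The case analyses establishing soundness of (R3) and (R4) are routine but must be carried out with care over the possible roles (centre / leaf / isolated vertex) of the affected vertices in the target star forest, and one must remember to handle negative instances by outputting a fixed small no-instance.
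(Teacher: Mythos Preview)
Your reduction rules are a legitimate alternative to the paper's, but the size-bound argument has a real gap. The step you flag as ``the main obstacle'' does not go through: the claim $|D|\le k+O(1)$ (and hence $|A|\le 2k+1$) is false for your reduced instances. Take $t$ disjoint copies of $C_5$. This graph is reduced under (R1)--(R4): there are no degree-$1$ vertices, so (R2) and (R3) are vacuous, and $C_5$ contains no induced $P_5$ (its only $5$-vertex induced subgraph is the cycle itself), so (R4) does not apply. Here $k=2t$; for any optimal $F$, each $C_5$ contributes one centre to $I$ and four vertices to $D$, so $|D|=4t=2k$ and $|A|=5t=\tfrac52 k$. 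Even worse, $t$ copies of $C_3$ give $k=t$ and $|A|=3t=3k$. Thus the route $|V|\le 2|A|$ with $|A|\le 2k+1$ cannot yield $4k+2$; the naive $|D|\le 2k$ you get from ``each $D$-vertex is incident to a deleted edge'' is actually tight, and neither (R4) nor the component bound in (R2) changes this, since the offending instances have no long degree-$2$ chains and few components.

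For comparison, the paper takes a quite different route. It does \emph{not} try to bound $|A|$ directly. Instead it first labels as centres the set $C$ of all vertices with a pendant neighbour (showing that some optimal solution has $C$ inside its centre set), deletes all edges inside $C$ and all but one edge from each non-$C$ vertex to $C$, then \emph{merges} $C$ into a single vertex and keeps only $k+2$ pendants. The size bound is a global edge count: a positive instance has at least $m-3k$ vertices of degree~$1$; deleting them leaves at most $3k$ edges, and since all remaining vertices but one have degree $\ge 2$, at most $3k+1$ vertices survive; together with the $k+2$ pendants this gives $4k+3$. (Incidentally, the paper's own proof yields $4k+3$, not $4k+2$.) If you want to rescue your approach, you would need a counting that ties $|L|$ and $|A|$ together rather than bounding them separately, or additional rules that actually kill short cycles --- the paper's remarks after the proof explicitly note that cycles are the obstruction to a tighter analysis.
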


We also show that, under ETH, \stardel{} does not admit a sublinear kernel.
To the best of our knowledge, this work is the first formally published work on kernelization of \stardel{}.

\subparagraph*{Note.} Cao and Yuping~\cite{cao2021improved} obtained independently at the same time 
results that are very similar to ours: they designed the same quadratic kernel for \tpadd{} and obtained a similar kernel for \splitadd{}. However, they were only able to prove an $O(k^{1.5})$ upper bound for the latter, whereas we prove a tighter $O(k)$ bound.

\section{Preliminaries}
\label{sec:prelim}
\subparagraph*{Elementary definitions.}
In this work, all the graphs are undirected and simple (i.e. with no parallel edges or self-loops).
When $G$ is a graph, $V(G)$ denotes the set of vertices of $G$, and $E(G)$ denotes its set of edges.
Throughout the paper, we use $n$ (resp. $m$) to denote the size of $V(G)$ (resp. $E(G)$).
If $uv \in E(G)$, we say that $u$ and $v$ are \emph{adjacent}.
Given a vertex $u\in V(G)$, $N(u) = \{v$ such that $uv$ is an edge$\}$ is the \emph{open neighborhood} of $u$,
and $N[u] = N(u) \cup\{u\}$ is the \emph{closed neighborhood} of $u$. 
The \emph{degree} of $u$ in $G$, denoted $d(u)$, is the size of $N(u)$.
We use $\delta(G)$ to denote the minimum degree of $G$.
The \emph{complement graph} $\bar{G}$ of $G$ is the graph with vertex set $V(G)$ 
and edge set $\{uv : u \neq v \text{ and } uv \notin E(G)\}$.
A \emph{dominating set} of $G$ is a set $D$ of vertices of $G$
such that every vertex of $G$ is either in $D$ or adjacent to a vertex of $D$.
An \emph{independent set} of $G$ is a set $I$ of vertices of $G$
that are pairwise not adjacent.

\subparagraph*{Kernelization algorithms.}

A \emph{kernelization algorithm} (in short, a kernel) is a polynomial-time algorithm 
that takes as input an instance $(G,k)$ of a parameterized problem $\Pi$ 
and outputs an instance $(G',k')$ that is positive if and only if $(G,k)$ is positive, the size of $G'$ is at most $f(k')$ for some computable function $f$. When $f$ is a polynomial, we say that the algorithm is a \emph{polynomial kernel}.
When dealing with graph problems, the size of the instance is often measured in terms of the number of vertices of $G'$.
Most kernelization algorithms (including those presented in this report) 
consist of the iterative application of \emph{reduction rules}. 
A reduction rule is a polynomial-time algorithm that input an instance $(G,k)$ and outputs another instance $(G',k')$.
We say that a reduction rule $R$ is \emph{safe} when $(G',k')$ is positive if and only if $(G,k)$ is.

\subparagraph*{Graph edge modification problems.}

Let $\Gg$ be a class of graphs.
In a (parameterized) $\Gg$-graph edge modification problem, we are given a graph $G$, an integer $k$, and ask whether it is possible
to transform $G$ into a graph $G' \in \Gg$ by modifying 
(\textit{adding}, \textit{removing}, or doing both, which is called \textit{editing}) 
at most $k$ edges.

Given a set of edges $F$, we use the notation
$G+F$, $G-F$, and $G \Delta F$ to denote the graphs with vertex set $V(G)$ and respective set of edges
$E(G) \cup F$, $E(G) \setminus F$ and $E(G) \Delta F$.

Formally, we will consider the following problems:
\begin{problem}[$\mathcal{G}$-\textsc{Addition} (resp. \textsc{Deletion}, resp. \textsc{Edition})]$ $\newline
	\textbf{Input}: A graph $G$, an integer $k\in\NN$.\\
	\textbf{Output}: ``YES'' if there exists a set $F$ of at most $k$ edges of $G$
	such that $G+F$ (resp. $G-F$, resp. $G \Delta F$) is in $\mathcal{G}$, ``NO'' otherwise.
\end{problem}


\section{Sublinear kernel for the Clique + Independent set deletion problem}\label{sec:cliqueis}

The goal of this section is to prove Theorem~\ref{thm:clique_is}. To obtain the announced kernel, we apply the \lnr{} technique.
For this problem, the labeling rules aim to identify vertices that will be in the independent set of a solution if it exists.
We can then delete all the edges incident to these vertices, decrease the parameter accordingly and remove these vertices from the graph.

We assume that $k$ is smaller than $m$ since otherwise, 
the instance is trivial: we obtain an independent set (which is a clique+IS graph) by deleting all the edges in $G$.

\begin{reductionrule}[Low degree reduction rule 1]\label{clique-lowdegree}
	If $v \in V(G)$ has degree $d(v) < \sqrt{2(m-k)} - 1$,
	delete $v$ from $G$ and decrease the parameter by $d(v)$.
\end{reductionrule}

This rule can be implemented to run in linear time. It is
moreover safe. Indeed, since we consider the deletion problem, any
vertex $v$ deleted by the rule has degree smaller than
$\sqrt{2(m-k)} - 1$ in $G-F$, hence cannot be in the clique of an
optimal solution according to the following lemma.

\begin{lemma}\label{lemma:lb-clique}
	Let $(G,k)$ is a positive instance of \cliquedel{}. If $F$ is a solution of $(G,k)$, then the clique in $G-F$ has size at least $\sqrt{2(m-k)}$.
\end{lemma}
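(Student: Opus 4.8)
The plan is to prove the bound directly: estimate how many edges can remain in $G-F$ in terms of the size of the clique of $G-F$, and then use that a deletion solution removes at most $k$ edges.

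Concretely, I would fix an arbitrary solution $F$ of $(G,k)$ and set $H := G-F$. Since we only delete edges and every edge of $F$ is an edge of $G$, we have $|E(H)| = m - |F| \ge m - k$. On the other hand, $H$ is a clique + IS graph, so $V(H)$ partitions into a clique $C$ and an independent set $I$ with no edge of $H$ between $C$ and $I$. Consequently every edge of $H$ has both endpoints in $C$, so $|E(H)| \le \frac{|C|(|C|-1)}{2} \le \frac{|C|^2}{2}$. Combining the two inequalities gives $|C|^2 \ge 2\,|E(H)| \ge 2(m-k)$, i.e.\ $|C| \ge \sqrt{2(m-k)}$, which is exactly the claim (the quantity under the square root is nonnegative precisely because we are in the case $k < m$, so there is nothing to check otherwise).

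There is no real obstacle here: the statement reduces to a one-line counting argument once one observes that in a clique + IS graph all edges lie inside the clique. For completeness, this is also precisely what makes Rule~\ref{clique-lowdegree} safe: if the rule deletes $v$, then $d_{G-F}(v) \le d_G(v) < \sqrt{2(m-k)} - 1 \le |C| - 1$, so $v$ cannot be a vertex of the clique $C$ of any solution and must lie in its independent set; hence all $d(v)$ edges incident to $v$ belong to every solution, and deleting $v$ while decreasing the parameter by $d(v)$ preserves the answer.
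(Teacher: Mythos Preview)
Your proof is correct and follows the same counting argument as the paper: bound $|E(G-F)|$ from below by $m-k$ and from above by $\binom{|C|}{2}\le |C|^2/2$, then solve for $|C|$. The additional paragraph on the safeness of Rule~\ref{clique-lowdegree} is also correct and matches the paper's justification given right before the lemma.
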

\begin{proof}
	Since $F$ contains at most $k$ edges, the graph $G-F$
	has at least $m-k$ edges. 
	Moreover, $G-F$ is a clique+IS graph, therefore
	all its edges are the edges of a unique clique. 
	Therefore, the size $c$ of the clique of $G-F$ satisfies ${c \choose 2}\geqslant m-k$, hence $c\geqslant \sqrt{2(m-k)}$.
\end{proof}

One can prove that this first rule can be extended to obtain a
linear kernel: when this rule cannot be applied, assuming that $m\geq 2k$ implies that $n = O(\sqrt{m})$.
When $m = O(k^2)$, we are done, and when $m = \Omega(k^2)$,
the minimum degree of the graph is $\Omega(k)$, therefore at most $O(1)$ vertices can be removed,
and in that case, the existence of a solution can be tested in polynomial time. 

To further reduce the size of the kernel to
$O(k/\log k)$, we use the two following rules to take care of very
sparse or very dense instances.

\begin{reductionrule}[Low degree reduction rule 2]\label{clique-logdegree}
	Let $v$ be a vertex of degree at most $2\log k-1$.
	If there is no solution $F$ of $(G,k)$ such that 
	$v$ is in the clique of $G-F$,
	remove $v$ from $G$ and decrease $k$ by $d(v)$.
\end{reductionrule}

This rule is trivially safe.
Moreover, it can be performed in polynomial time. Indeed, since
we consider an edge-deletion problem, if $v$ lies in the clique $K$ of
$G-F$, then every vertex of $K$ is adjacent to $v$ in $G$,
i.e. $K\subseteq N[v]$.  Since the degree of $v$ is at most
$2\log k - 1$, there are at most $k^2 $ subsets in $N[v]$. We can
therefore try all of them and decide in polynomial time whether there
exists a solution $F$ of $(G,k)$ such that $v$ is in the clique in
$G-F$.

\begin{reductionrule}[High degree]\label{clique-highdensity}
	If $G$ has minimum degree $\delta(G) \geq k/(2\log k)$,
	solve the instance and output a trivial equivalent instance.
\end{reductionrule}

Again, this rule is clearly safe. The not-so-easy part is to show that
\cliquedel{} can be decided in polynomial time when
$\delta(G) \geq k/(2\log k)$.

\begin{restatable}[]{lemma}{cliqueisolfirst}
	Rule~\ref{clique-highdensity} can be applied in polynomial time.
\end{restatable}

To finish the proof of Theorem~\ref{thm:clique_is}, it remains to bound
the size of the reduced graph. This is the goal of the following
lemma.

\begin{restatable}[]{lemma}{cliqueisolsecond}\label{lemma:positive-clique}
	If $(G,k)$ is a positive instance and none of the rules can be
	applied, then $|V(G)| \leq 2 \cdot \frac{k}{\log k} + 1$.
\end{restatable}

\subparagraph*{Concluding remarks.} 


Finally, one can easily show that Rule~\ref{clique-lowdegree} can be adapted for the \cliqueedit{} problem 
by modifying the constant. 
On the other hand, it seems that Rules~\ref{clique-logdegree} and~\ref{clique-highdensity} 
do not readily generalize to \cliqueedit{}, therefore we were not able to obtain an $O(k/\log k)$ kernel for this problem. 
However, it is an easy exercise to show that we can weaken them to obtain a kernel with at most $k/c$ vertices for any possible constant $c > 1$, at the cost of a running time in $O(n^c)$.

\section{Linear kernel for addition towards split graphs}\label{sec:split}

The goal of this section is to prove Theorem~\ref{thm:split}.
Since the class of split graphs is closed under complementation,
it is sufficient to prove that Theorem~\ref{thm:split} 
holds for \splitadd{}.

We use the structure of the input graph to detect and label vertices 
that will be in the clique or the independent set part of a split decomposition 
of a well-chosen solution.
More precisely, we show that, if the instance $(G,k)$ is positive, the labeling constructed by our algorithm satisfies that
there exists a solution $F$ of $(G,k)$ and a split decomposition $(K^*, I^*)$ of $G+F$ 
such that all the vertices labeled as ``clique'' (resp. ``independent set'') 
are in $K^*$ (resp. $I^*$).
We then prove that if $(G,k)$ is a positive instance, 
then the number of unlabeled vertices at the end of the algorithm is $O(k)$.
Moreover, we show that we can reduce the number of labeled vertices to $O(k)$.
Combining the above yields a linear kernel.

We present our reduction rules in Section~\ref{sec:split_rules} and prove them in subsequent sections.

\subsection{Labeling and reduction rules}\label{sec:split_rules}

Our algorithm keeps track of a partition $(K,I,D)$ of $V(G)$, which
corresponds to the labels of the vertices of $G$. The set $K$
(resp. $I$) stands for the vertices already labeled ``clique''
(resp. ``independent set'') while $D$ (for ``don't know'') contains
the vertices that are not yet labeled.	Initially, no vertex is
labeled, hence $K = \emptyset, I = \emptyset$ and $D = V(G)$.

We will apply the following reduction rules, whose correction is postponed to Section~\ref{sec:split-safe}.

\begin{reductionrule}[I-rules]\label{split-is} 
	Move $v\in D$ to $I$ whenever at least one of the following holds:
	\begin{alphaenumerate}
		\item\label{split-is-r1} $v$ has all of its neighbors in $K$,
		\item\label{split-is-r2} $v$ is non-adjacent to at least $k+1$ vertices of $K$.
	\end{alphaenumerate}
\end{reductionrule}
Notice that this rule applies to isolated vertices since whenever $v$
is isolated, $N(v) = \emptyset \subseteq K$.

\begin{reductionrule}[K-rules]\label{split-k} 
	Move $v\in D$ to $K$ whenever at least one of the following holds:
	\begin{alphaenumerate}
		\item\label{split-k-r1} $v$ has a neighbor in $I$,
		\item\label{split-k-r2} $N(v)$ contains at least $k+1$ non-edges,
		\item\label{split-k-r3} $v$ dominates $K \cup D$.
	\end{alphaenumerate}
\end{reductionrule}

The following reduction rule simply ensures that $K$ is a clique and $I$ an independent set.

\begin{reductionrule}[Reduction rules]\label{split-red} 
	Apply one of the following rules as long as possible:
	\begin{alphaenumerate}
		\item\label{split-red-r1} if there is a non-edge $e$ between vertices of $K$, then 
		add $e$ to $E(G)$ and decrease $k$ by~1.
		\item\label{split-red-r2} if $k < 0$ then return a trivially negative instance.
		\item\label{split-red-r3} if there is an edge between vertices of $I$, then return a trivially negative instance.
	\end{alphaenumerate}
\end{reductionrule}

We apply these rules exhaustively, and stop when none can be applied. 
At each step, we remove a vertex from $D$ or we add an edge to $G$.
Then the algorithm stops after at most $n^2$ steps. 
Moreover, one can easily apply the rules in polynomial time.
When none of the previous rules can be applied, we apply the following reduction rule.

\begin{reductionrule}[Unlabeling algorithm]\label{split-unlabel}
	\begin{alphaenumerate}
	\item \label{split-unlabel-r2} If $K$ contains at least $k+1$ vertices, replace $K$ by a set 
	$K' = \{v_1',\ldots,v_k'\}$ of $k$ vertices and denote by $G'$ the resulting graph.
	Moreover, for each vertex $v\in D$, if $v$ is non-adjacent to $t$ vertices of $K$ in $G$, 
	then connect $v$ to $v_{t+1}', \ldots, v_k'$ and do not connect it to $v_1',\ldots,v_t'$.
	\item \label{split-unlabel-r3} Replace $I$ by an independent set $I'$ of size $\sqrt{2k}$ connected to $K'$ in a complete bipartite manner, and not connected to $D$.
	\end{alphaenumerate}
\end{reductionrule}

With this rule, we can bound the number of vertices of the resulting graph by $|D|$ plus at most $k$ vertices (for $K'$), plus at most $\sqrt{2k}$
vertices (for $I'$). Therefore, Theorem~\ref{thm:split} boils down to
the following lemma.

\begin{restatable}[]{lemma}{splitpos}\label{lemma:split-small-rest}
	If $(G,k)$ is a positive instance, then $|D| \leq 10k + 5\sqrt{2k}+ 4$.
\end{restatable}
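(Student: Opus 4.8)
The plan is to fix a positive instance $(G,k)$ together with a solution $F$ of size at most $k$ and a split decomposition $(K^*, I^*)$ of $G+F$ that is "compatible" with the labeling, meaning $K \subseteq K^*$ and $I \subseteq I^*$ (the existence of such a decomposition is exactly the safeness statement from Section~\ref{sec:split-safe} that we may assume). The goal is to bound $|D|$, and since $D = (D \cap K^*) \cup (D \cap I^*)$ we bound the two parts separately. The hard part will be bounding $|D \cap K^*|$, the unlabeled vertices that end up in the clique; once this is done, the bound on $|D \cap I^*|$ should follow by a short counting argument, which the introduction advertises as "a tricky and short argument".

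First I would bound $|D \cap I^*|$ in terms of $|D \cap K^*|$. Every edge of $G[D \cap I^*]$ must be deleted—wait, we are doing addition—so instead: a vertex $v \in D \cap I^*$ must have all its neighbors inside $K^*$ (since $I^*$ is independent in $G+F$ and we only add edges), so every edge from $v$ to $D \cap I^*$ or to $I$ is a non-edge of $G+F$, impossible; hence $N_G(v) \subseteq K^* = K \cup (D \cap K^*)$. If $v$ were adjacent to all of $K$, then since $v \notin I$, I-rule~\ref{split-is-r1} would have fired (its neighbors would all be in $K$ only if it also has no neighbor in $D\cap K^*$); more carefully, I-rule~\ref{split-is-r2} guarantees $v$ is non-adjacent to at most $k$ vertices of $K$, and if $v$ is adjacent to $\leq |D\cap K^*|$ vertices of $D \cap K^*$ and non-adjacent to $\geq |K| - k$ of $K$, these non-edges are all repaired by $F$. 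Counting: $\sum_{v \in D \cap I^*}$ (number of non-neighbors of $v$ in $K^*$) $\leq k$ since each such non-edge lies in $F$. Combined with the fact that the I-rules did not apply (so each $v \in D \cap I^*$ has at least one neighbor outside $K$, i.e. in $D \cap K^*$, and is non-adjacent to at most $k$ vertices of $K$), one derives that $|D\cap I^*|$ is at most roughly $|D \cap K^*| + $ a bounded term, or is itself $O(k)$ directly; I would work out the exact bookkeeping to land on the claimed constants.

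The main obstacle is bounding $|D \cap K^*|$ by $O(k)$. Here the idea is that $K^*$ is large and $F$ only touches $\leq k$ vertices in a non-trivial way, so "most" vertices of $K^*$ have their $G$-neighborhood already equal to their $G+F$-neighborhood, meaning they are adjacent in $G$ to all of $K^*$ and, crucially, dominate $K^* \cup (D \cap I^* \cap N(v))$ — the point being to set up so that K-rule~\ref{split-k-r3} ($v$ dominates $K \cup D$) would have fired on such a vertex, forcing it out of $D$. Let $B \subseteq K^*$ be the set of vertices incident to an edge of $F$; then $|B| \leq 2k$. Any $v \in D \cap K^* \setminus B$ has $N_G(v) = N_{G+F}(v) \supseteq K^* \setminus \{v\}$. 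For such a $v$ to remain in $D$, it must fail K-rule~\ref{split-k-r3}, so it fails to dominate $K \cup D$, i.e. there is a vertex $w \in K \cup D$ with $w \notin N[v]$. Since $v$ is adjacent to all of $K^* \supseteq K \cup (D \cap K^*)$, such a $w$ must lie in $D \cap I^*$. Thus every $v \in D \cap K^* \setminus B$ has a private-ish non-neighbor $w_v \in D \cap I^*$, and the pair $(v, w_v)$ is a non-edge of $G$; but it need not be in $F$ (since $w_v \in I^*$, $vw_v$ stays a non-edge in $G+F$), so this alone is not enough — I would instead need to argue via K-rules~\ref{split-k-r1} and~\ref{split-k-r2}: $v$ has no neighbor in $I$ and $N_G(v)$ contains at most $k$ non-edges. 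Since $N_G(v) \supseteq K^*\setminus\{v\}$ and we've turned $K^*$ into a clique, the non-edges inside $N_G(v)$ come from non-edges among $D \cap K^*$ and from non-edges between $K^*$ and $N_G(v) \cap (D \cap I^*)$; all non-edges inside $K^*$ lie in $F$ so there are $\leq k$ of them total, which is consistent, so the leverage must come from bounding $|N_G(v) \cap D \cap I^*|$ and charging to $F$. I expect the clean argument to go: the edges of $F$ incident to $K^*$ are either (a) inside $K^*$, at most $k$, repairing the clique, or (b) between $K^*$ and $I^*$; a vertex of $D \cap I^*$ with a neighbor in $D \cap K^* \setminus B$ contributes a non-edge in that neighbor's neighborhood for every other clique vertex it misses — and pushing this counting through, together with the already-established bound on $|D \cap I^*|$, yields $|D \cap K^*| \leq 2k + O(|D\cap I^*|/k) + \dots$, which closes a mutual recursion between the two quantities and produces the stated $10k + 5\sqrt{2k} + 4$ after optimizing constants. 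Assembling: $|D| = |D \cap K^*| + |D \cap I^*| \leq 10k + 5\sqrt{2k} + 4$, which combined with Rule~\ref{split-unlabel} gives the kernel of Theorem~\ref{thm:split}.
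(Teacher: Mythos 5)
There are genuine gaps in both halves of your plan. For $|D\cap I^*|$, your central counting step is false: you claim that each non-edge between a vertex $v\in D\cap I^*$ and $K^*$ "lies in $F$", but a split graph places no constraint on the bipartition between clique and independent set, so such non-edges never need to be repaired and cannot be charged to $F$. The actual leverage comes from a different place that your proposal never identifies: for $v\in D\cap K^*$, the set $I_v=N(v)\cap I^D$ is an independent set of $G$ sitting inside $N(v)$, so it contributes $\binom{|I_v|}{2}$ non-edges to $N(v)$, and Rule~\ref{split-k}-\ref{split-k-r2} forces $|I_v|\leq\sqrt{2k}+1$. Combined with the fact that every vertex of $I^D$ has a neighbor in $K^D$ (else Rule~\ref{split-is}-\ref{split-is-r1} fires), this already gives $|I^D|\leq |K^D|(\sqrt{2k}+1)=O(k^{3/2})$; getting down to $O(k)$ requires a further global counting argument (splitting $I^D$ by degree into $K^D$ and the vertices of $K^D$ by how many edges of $F$ they meet, then an average-degree computation), none of which appears in your sketch. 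Your hoped-for "mutual recursion closing to $10k+5\sqrt{2k}+4$" is not substantiated and, as set up, does not close.

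For $|D\cap K^*|$ you correctly observe that the $\geq |K^D|-2k$ vertices untouched by $F$ dominate $K^*$ and that each must have a non-neighbor in $I^D$ to escape Rule~\ref{split-k}-\ref{split-k-r3}, and you correctly note that this non-edge need not be in $F$ — but you stop exactly where the real argument begins. The missing idea is an iteration: take $u$ untouched with $I_u\neq I^D$; since Rule~\ref{split-k}-\ref{split-k-r2} did not fire on $u$, there are at most $k$ non-edges between $I_u$ and $K^D$, so (provided there are at least $2k+1$ untouched vertices, which is why the threshold is $4k+1$) at least $k+1$ untouched vertices dominate $K^*\cup I_u$. Any $w\in I^D\setminus I_u$ has a neighbor $z\in K^D$, and if $w$ were anticomplete to those $k+1$ vertices, then $N(z)$ would contain $k+1$ non-edges, triggering Rule~\ref{split-k}-\ref{split-k-r2} on $z$; so some untouched $v$ has $I_v\supsetneq I_u$. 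Iterating produces a vertex dominating $K\cup D$, the desired contradiction, yielding $|K^D|\leq 4k$. Without this chain argument and without the independent-set observation above, the proof does not go through.
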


While it is not very difficult to prove that the reduction rules are safe, the main technical contribution of this section consists in proving Lemma~\ref{lemma:split-small-rest}. 
The proof is split into two parts. First, we prove that the number of vertices of $D$ in the clique $K^*$ of the solution is linear in $k$. We prove it by showing that, if too many vertices of $K^*$ are in $D$, the neighborhood of many of them is not modified. And amongst them, one must be complete to $K\cup D$, a contradiction with Rule~\ref{split-k}.

Arguing that the number of vertices of $D$ in the independent set $I^*$ of the solution is $O(k)$ is more involved. 
First note that if a vertex has an independent set of size larger than $O(\sqrt{k})$ in its neighborhood, 
it is added to $K$ by Rule~\ref{split-k}-\ref{split-k-r2}. 
Since $D$ only contains $O(k)$ vertices in the clique, the number of vertices of $D$ 
in the independent set is at most $O(k^{3/2})$. 
To obtain a better upper bound on the size of $D$, 
we carefully distinguish the size of the neighborhood of the vertices of $D \cap K^*$ in $I^*$. Very roughly, we prove that the number of vertices in $D \cap K^*$ with many neighbors in $I^*$ is bounded by a sublinear function which permits to improve the size of the kernel. 
The proof of Lemma~\ref{lemma:split-small-rest} is postponed to Section~\ref{sec:split-size}.

Lemma~\ref{lemma:split-small-rest} together with Rule~\ref{split-unlabel} ensure that the following reduction rule is correct, which completes the proof of Theorem~\ref{thm:split}:

\begin{reductionrule}[Final Rule]\label{split-reject}
If none of the previous rules can be applied, and the size of the instance is at least $11k + 6\sqrt{2k}+ 5$,
return a trivially negative instance.
\end{reductionrule}

\subsection{Correctness of the reduction rules}\label{sec:split-safe}

To analyze our algorithm,
we study the evolution of the instance $(G, k)$ with the partition $P = (K,I,D)$ after the application of each rule.
We will refer to the tuple $(G,k,P)$ as a \emph{generalized instance} of \splitadd{}. 

The following definition formalizes when a labeling of $G$ is compatible with a solution $F$.

\begin{definition}
	Let $H$ be a graph, let $F$ be a set of edges such that $H+F$ is a split graph, 
	and let $P = (K,I,D)$ be a partition of $V(H)$.
	We say that $P$ \emph{is compatible with} $F$, and denote it $P \vDash F$, 
	if there exists a split decomposition $(K^*, I^*)$ of $H+F$ such that $K\subseteq K^*$ and
	$I\subseteq I^*$. In that case, we say that the decomposition $(K^*, I^*)$ \emph{witnesses} the fact $P \vDash F$. 
\end{definition}

A generalized instance represents a graph along with a partial labeling of the vertices.
Such an instance is positive when there exists a solution that is compatible with the labeling.
This leads to the following definition.
\begin{definition}[Positive generalized instance]\label{split-gen-positive}
	A generalized instance $(G,k,P)$ is positive
	if there exists a solution $F$ of $(G,k)$ such that $P\vDash F$.
\end{definition}

This allows us to extend safeness properties to reduction rules operating on generalized instances.
We now show that the labeling and reduction rules preserve the existence of a solution.

\begin{restatable}[]{lemma}{splitfirst}\label{lm:split-compatible}
	Rules~\ref{split-is} to~\ref{split-red} are safe. 
\end{restatable}

Note that the initial labeling $P = (\emptyset, \emptyset, V(G))$ is
compatible with every solution of $(G,k)$ (if any). Therefore, by
applying transitively Lemma~\ref{lm:split-compatible}, we get that
the labeling and reduction process is
safe.

We finally show that Rule~\ref{split-unlabel} is safe. 

\begin{restatable}[]{lemma}{splitsecond}
	Rule~\ref{split-unlabel} is safe.
\end{restatable}

\subsection{Structure of positive instances}\label{sec:split-size}

This section is devoted to the proof of Lemma~\ref{lemma:split-small-rest},
restated below.

\splitpos*

In what follows, we assume that the input is a positive instance 
and the labeling/reduction process stopped and returned a generalized instance $(G,k,P)$.
In particular, Rules~\ref{split-is} to~\ref{split-red} cannot be applied.
By Lemma~\ref{lm:split-compatible},
we get that there exists a solution $F$ of $(G,k)$
such that $|F|\leq k$ and $P\vDash F$.
Unrolling the definition, this means that there exists a split decomposition $(K^*, I^*)$ of $G + F$ 
such that $K\subseteq K^*$ and $I\subseteq I^*$. 
Let $K^D = D\cap K^*$ be the set of unlabeled vertices that belong to the clique,
and let $I^D = D\cap I^*$ be the set of the unlabeled vertices that belong to the independent set.
For every $v\in D$, let $I_v = N(v)\cap I^D$.
We give an upper bound on the cardinality of $D$ by
giving separate upper bounds on the respective cardinalities of $K^D$ and $I^D$.

Before diving into the details of the proof,
let us make two observations on the structure
of $D$, that follow from the fact that the labeling rules cannot be applied.

\begin{restatable}[]{observation}{firstobs}\label{obs:small_neighbor_k_i}
	For every vertex $v\in K^D$, $|I_v|\leq \sqrt{2k}+1$.
\end{restatable}

\begin{restatable}[]{observation}{secondobs}\label{obs:no_isolated}
	Every vertex $v\in I^D$ has a neighbor in $K^D$.
\end{restatable}

We first prove that $|K^D| = O(k)$. 

\begin{lemma}\label{claim:small_kd}
	We have $|K^D|\leq 4k$.
\end{lemma}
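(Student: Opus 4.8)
We have a positive instance with solution $F$, $|F| \le k$, and split decomposition $(K^*, I^*)$ of $G+F$ with $K \subseteq K^*$, $I \subseteq I^*$. We want $|K^D| \le 4k$ where $K^D = D \cap K^*$. The plan is to argue that if $K^D$ is large, then many vertices of $K^D$ have unmodified neighborhoods in $G$, and among those, one must dominate $K \cup D$ — contradicting the fact that Rule~\ref{split-k}-\ref{split-k-r3} cannot be applied.

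**Key steps.** First I would bound the number of "bad" vertices of $K^D$ — those incident to an edge of $F$. Since $|F| \le k$, at most $2k$ vertices of $K^D$ are endpoints of edges in $F$; call the rest $K^D_{\text{good}}$, so $|K^D_{\text{good}}| \ge |K^D| - 2k$. For a good vertex $v$, its neighborhood in $G$ equals its neighborhood in $G+F$, so $N_G(v) \supseteq K^* \setminus \{v\}$, i.e. $v$ is adjacent in $G$ to all of $K^* \setminus \{v\}$, hence to all of $K \subseteq K^*$ and all of $K^D \setminus \{v\} \subseteq K^*$. So each good vertex already dominates $K \cup (K^D \setminus \{v\})$ in $G$. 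To get domination of all of $K \cup D$, the only vertices that might be missed are those in $I^D = D \cap I^*$ and in $K^D$ we already handle. A good vertex $v$ fails to dominate $K \cup D$ only if it is non-adjacent (in $G$) to some vertex of $I^D$.

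**The counting argument.** Now I count non-adjacencies between $K^D_{\text{good}}$ and $I^D$. By Observation~\ref{obs:no_isolated}, every $w \in I^D$ has a neighbor in $K^D$; but more usefully, Observation~\ref{obs:small_neighbor_k_i} tells us each $v \in K^D$ has $|I_v| = |N(v) \cap I^D| \le \sqrt{2k}+1$, so $v$ is non-adjacent to all but at most $\sqrt{2k}+1$ vertices of... wait, that's the wrong direction. Instead I would bound $|I^D|$ first or argue via edges of $F$ between $K^D$ and $I^D$: every edge of $G+F$ from $v \in K^D$ to $w \in I^D$ with $vw \notin E(G)$ is in $F$. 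Actually the cleaner route: a good vertex $v$ has $N_G(v) = N_{G+F}(v) \supseteq K^* \setminus \{v\}$, so $v$ IS adjacent in $G$ to every vertex of $I^D \subseteq I^* \subseteq K^*$?? No — $I^* \not\subseteq K^*$. But $N_{G+F}(v)$ for $v \in K^*$ need not contain all of $I^*$. So a good vertex dominates $K \cup D$ in $G$ iff it dominates $I^D$ in $G+F$, i.e. $I_v \cup (\text{edges of } F) = $ ... Let me reconsider: I would instead show that some good $v$ has $I^D \subseteq N_G(v)$, equivalently $I^D \subseteq N_{G+F}(v)$ since $v$ is good. The obstacle is that there may be no such $v$.

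**Resolving it / main obstacle.** The fix: pick a good vertex $v$ maximizing $|I_v|$; if $|K^D_{\text{good}}| \ge 2k+1$, I claim the vertices of $I^D$ not adjacent (in $G+F$) to the whole of $K^D_{\text{good}}$ can be charged to edges of $F$, giving few such vertices, and then a pigeonhole over the at most $k$ relevant $I^D$-vertices and $\le k$ edges of $F$ forces some good $v$ to be adjacent to all of $I^D$, hence dominating $K \cup D$. The main obstacle I anticipate is exactly this charging: making precise that "most" good vertices are complete to $I^D$ in $G$, which requires carefully counting that $\sum_{w \in I^D} (\text{non-neighbors of } w \text{ in } K^D_{\text{good}})$ is controlled by $|F|$ — since for such a non-adjacency to be "repaired" we'd need an edge in $F$, but $v$ good means no edge of $F$ touches $v$, so $vw \notin E(G+F)$, and then $w$'s only neighbors in $K^*$... this forces $|N_{G+F}(w) \cap K^*|$ small, contradicting that $w \in I^*$ needs... hmm. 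So the real crux is showing $|I^D| \le 2k$ or similar first (each $w \in I^D$ contributes an edge of $F$ via Observation~\ref{obs:no_isolated}-type reasoning), then a good $v$ complete to $K^* \setminus\{v\}$ is complete to $K \cup K^D$, and if additionally $v$'s non-neighbors in $I^D$ all required edges of $F$ that don't touch $v$ — impossible if $v$ is complete to $I^D$ in $G+F$. I expect the accounting to close with a constant of $4$: $2k$ for $F$-endpoints in $K^D$, plus handling $I^D$ via another $2k$, leaving one good fully-dominating vertex when $|K^D| > 4k$.
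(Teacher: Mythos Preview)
Your setup is correct and matches the paper: at most $2k$ vertices of $K^D$ are incident to an edge of $F$, so (assuming $|K^D|\ge 4k+1$) there is a set $X\subseteq K^D$ of at least $2k+1$ ``good'' vertices, each of which is already complete in $G$ to $K^*=K\cup K^D$. The remaining task is exactly the one you isolate: show that some $v\in X$ is also complete to $I^D$, hence dominates $K\cup D$ and triggers Rule~\ref{split-k}-\ref{split-k-r3}.

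This is where your argument has a genuine gap. Your first attempt charges each non-adjacency between $X$ and $I^D$ to an edge of $F$; as you yourself notice, this fails because good vertices are by definition not incident to $F$, so such non-adjacencies are non-edges in $G+F$ as well and cannot be charged to anything. Your fallback is to bound $|I^D|$ first, but that is circular: Lemma~\ref{claim:small_id} uses Lemma~\ref{claim:small_kd}. The suggestion that ``each $w\in I^D$ contributes an edge of $F$'' via Observation~\ref{obs:no_isolated} is also off: that observation gives $w$ a neighbor in $K^D$, which is an edge of $G$, not of $F$. So none of the proposed charging schemes actually account for the missing $2k$ in the constant.

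The idea you are missing is to use Rule~\ref{split-k}-\ref{split-k-r2} (at most $k$ non-edges inside any $N(v)$) not just once but iteratively, to \emph{grow} $I_u$. Fix $u\in X$. Since $u$ is good, $K^D\subseteq N(u)$, and also $I_u\subseteq N(u)$; hence every non-edge between $I_u$ and $K^D$ lies inside $N(u)$, so there are at most $k$ of them. Because $X$ is a clique in $G$, each such non-edge touches at most one vertex of $X$, so at least $|X|-k\ge k+1$ vertices of $X$ are complete to $I_u$; call this set $X'$. Now take any $w\in I^D\setminus I_u$ and its neighbor $z\in K^D$ (Observation~\ref{obs:no_isolated}). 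If $w$ were anticomplete to $X'$, then $N(z)$ would contain both $w$ and all of $X'$ (since $z\in K^*$ and $X'\subseteq X$), yielding $\ge k+1$ non-edges in $N(z)$, contradicting Rule~\ref{split-k}-\ref{split-k-r2}. So $w$ is adjacent to some $v\in X'$, and $I_v\supseteq I_u\cup\{w\}$. Iterating, one reaches $v\in X$ with $I_v=I^D$, the desired contradiction. This is the step your proposal does not supply.
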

\begin{proof}
	Let us prove this statement by contradiction: 
	we prove that if $|K^D|\geq 4k+1$, then there is a vertex in $D$ that dominates
	$D\cup K$, which contradicts the fact that Rule~\ref{split-k}-\ref{split-k-r3}
	cannot be applied.
	
	By assumption, $K^*$ is a clique in $G+F$. Since $F$ contains at most $k$ edges, 
	there are at most $2k$ vertices of $K^D$ that are adjacent to edges of $F$.
	Since $|K^D|\geq 4k+1$, there are at least $2k+1$ vertices in $K^D$ that dominate
	$K^* = K^D \cup K$. Let $X$ denote the set of such vertices.
	We will now show that there is a vertex in $X$ that also dominates $I^D$,
	that is, a vertex of $K^D$ that dominates $K^D \cup I^D \cup K = D \cup K$.
	To prove the existence of this vertex, we will prove that for any vertex $u$
	in $X$ such that $I_u \neq I^D$, there exists a vertex $v\in X$
	such that $|I_v| > |I_u|$. By applying this property repeatedly, we eventually find a vertex
	$v$ such that $I_v = I^D$.

	Let $u$ be a vertex of $X$ such that $I_u \neq I^D$. 
	Since $K^D \subseteq N[u]$ and Rule~\ref{split-k}-\ref{split-k-r2} cannot be applied, 
	there are at most $k$ non-edges between $I_u$ and $K^D$.
	Hence, these non-edges are adjacent to at most $k$ vertices of $X$ ($X$ being a clique, every non-edge
	is incident to at most one vertex of $X$), and then
	at least $k+1$ vertices of $X$ dominate $I_u$. Let $X'$ be the subset of vertices of $X$ 
	that dominate $K^* \cup I_u$.
	Let $w$ be a vertex of $I^D \setminus I_u$.
	As noted in Observation~\ref{obs:no_isolated}, $w$ is adjacent to some vertex $v \in K^D$. Assume that $w$ is anticomplete to $X'$, so that $v\notin X'$. Since $v\in K^*$, every vertex of $X'$ is adjacent to $v$. Therefore $v$ contains at least $k+1$ non-edges in its neighborhood, namely the edges between $w$ and $X'$, a contradiction.

        Therefore, $v\in X'$ and the conclusion follows since $I_v$ contains $I_u$ and $w$. 
\end{proof}

By bounding locally the size of the neighborhood of each vertex in
$K^D$ using Observation~\ref{obs:small_neighbor_k_i}, Lemma~\ref{claim:small_kd}
directly provides an $O(k^{\frac32})$ kernel. However, as we will show,
this is not tight. Using a more global counting argument, we can show that
$|I^D| = O(k)$.

\begin{lemma}\label{claim:small_id}
	We have $|I^D| \leq 6k + 5\sqrt{2k}+ 4$.
\end{lemma}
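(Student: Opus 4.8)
The plan is to bound $|I^D|$ by a weighted counting argument over the vertices of $K^D$, refining the crude $O(k^{3/2})$ bound that follows from Observation~\ref{obs:small_neighbor_k_i} and Lemma~\ref{claim:small_kd}. The key observation is that a vertex $u \in K^D$ whose neighborhood in $I^D$ is large ``costs'' a lot in terms of non-edges (since $N(u)$ contains $\binom{|I_u|}{2}$ non-edges among $I_u$ alone, plus non-edges toward the rest of $K^*$), so only few vertices of $K^D$ can have a large $I_v$. Concretely I would split $K^D$ into $K^D_{\text{big}}$, the vertices $v$ with $|I_v| > \sqrt{2k}/c$ (for a suitable constant, with the precise threshold tuned to the target bound), and $K^D_{\text{small}}$, the rest. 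Each vertex of $I^D$ has, by Observation~\ref{obs:no_isolated}, a neighbor in $K^D$, so $I^D \subseteq \bigcup_{v \in K^D} I_v$.

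**Bounding the contribution of the ``small'' vertices.** For $v \in K^D_{\text{small}}$ we have $|I_v| \le \sqrt{2k}/c$, so these vertices alone contribute at most $|K^D_{\text{small}}| \cdot \sqrt{2k}/c \le 4k \cdot \sqrt{2k}/c$ to $\bigcup I_v$ — which is still $O(k^{3/2})$, so a naive union bound here is not enough. The real point must be that the sets $I_v$ for $v \in K^D$ cannot overlap too little: if $u, v \in K^D$ with $w \in I_u \setminus I_v$, then $w$ is a non-edge of $v$ inside $N(v)$ only if $w \in N(v)$, which it is not, so I instead need to count non-edges between $I_u$ and $K^D$. Following the idea already used in Lemma~\ref{claim:small_kd}: since Rule~\ref{split-k}-\ref{split-k-r2} cannot be applied, for every $v \in K^D$ there are at most $k$ non-edges between $I_v \cup (\text{rest of } K^*)$; summing $\binom{|I_v|}{2} \le k$ over $v$ is the wrong direction, so instead I would fix a vertex $w \in I^D$, let $d_w = |N(w) \cap K^D|$ be its degree into $K^D$, and double-count the pairs $(w, v)$ with $w \in I_v$: this equals $\sum_{w \in I^D} d_w = \sum_{v \in K^D} |I_v|$.

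**The main counting step.** Here is where the work concentrates: I would bound $\sum_{v \in K^D}|I_v|$ from above using the $k$-edge budget and the fact that no rule applies. The cleanest route: each $w \in I^D$ has a neighbor in $K^D$; pick for each such $w$ one neighbor $\phi(w) \in K^D$, giving a partition $I^D = \bigsqcup_{v} \phi^{-1}(v)$ with $\phi^{-1}(v) \subseteq I_v$. For a vertex $v \in K^D$ with $|I_v|$ large, count the non-edges inside $N(v)$: these include all non-edges inside $I_v$, of which there are $\binom{|I_v|}{2}$, plus non-edges between $I_v$ and $K^*\setminus\{v\}$. Since $F$ has $\le k$ edges and $K^*$ is a clique in $G+F$, at most $2k$ of the pairwise relations inside $K^*$ and at most $k$ of the $I_v$–$K^*$ relations differ from $G$; combining this with Rule~\ref{split-k}-\ref{split-k-r2} ($\le k$ non-edges in $N(v)$) forces $\binom{|I_v|}{2} \le k$, i.e. $|I_v| \le \sqrt{2k}+1$ — exactly Observation~\ref{obs:small_neighbor_k_i}, which is too weak on its own. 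The improvement must come from an \emph{amortized} bound: the $k$ added edges can ``fix'' the non-edges incident to at most $2k$ vertices of $K^D$, so at most $O(\sqrt k)$ vertices $v$ of $K^D$ can simultaneously have $|I_v|$ of order $\Theta(\sqrt k)$ while the remaining ones have $|I_v| = O(1)$ or share their $I_v$ with a low-cost vertex. I expect the argument to go: vertices of $I^D$ with a neighbor $v$ having $|I_v| \le$ some constant contribute $O(k)$ in total (one per such neighbor, and there are $\le 4k$ neighbors), and vertices of $I^D$ all of whose $K^D$-neighbors have large $I_v$ are trapped because each large-$I_v$ vertex $v$ uses up $\Omega(\sqrt k)$ of the edge budget relative to the others, bounding the number of such $v$ by $O(\sqrt k)$ and hence this part of $I^D$ by $O(\sqrt k)\cdot O(\sqrt k) = O(k)$.

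**Main obstacle.** The hard part will be making the amortization rigorous: controlling the overlaps of the sets $I_v$ and turning ``$F$ has only $k$ edges'' into a bound of the form ``$\sum_{v \in K^D} (|I_v| - O(1))_+ = O(k)$''. I anticipate needing an argument in the spirit of the proof of Lemma~\ref{claim:small_kd} — pick a vertex $w \in I^D$ not dominated by the ``cheap'' part of $K^D$, trace a non-edge it creates in the neighborhood of one of its $K^D$-neighbors, and derive a contradiction with Rule~\ref{split-k}-\ref{split-k-r2} unless the relevant count is small — but iterated/weighted so that each unit of $|I^D|$ beyond $O(k)$ forces a distinct edge of $F$, which is impossible once $|I^D| > 6k + 5\sqrt{2k} + 4$. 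Once $|I^D| = O(k)$ and $|K^D| \le 4k$ are both established, $|D| = |K^D| + |I^D| \le 10k + 5\sqrt{2k} + 4$ follows immediately, completing Lemma~\ref{lemma:split-small-rest}.
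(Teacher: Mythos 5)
Your proposal is an outline rather than a proof, and the step you yourself flag as the ``main obstacle'' --- the amortization showing that few vertices of $K^D$ can have large $I_v$ --- does not go through as you describe it. You argue that a vertex $v\in K^D$ with $|I_v|=\Theta(\sqrt{k})$ ``uses up $\Omega(\sqrt{k})$ of the edge budget,'' but this is not true: the $\binom{|I_v|}{2}$ non-edges inside $I_v$ lie between vertices of $I^*$ and are \emph{not} edges of $F$ --- they remain non-edges in the solution. The only constraint they impose is that Rule~\ref{split-k}-\ref{split-k-r2} does not fire, which caps $|I_v|$ at $\sqrt{2k}+1$ (Observation~\ref{obs:small_neighbor_k_i}) but says nothing about how many vertices of $K^D$ attain this cap. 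Since each of the up to $4k$ vertices of $K^D$ may independently have $|I_v|$ close to $\sqrt{2k}$ without spending any of the budget $|F|\le k$, your partition of $K^D$ by the size of $I_v$ gives no handle on $|I^D|$ beyond the crude $O(k^{3/2})$ bound you start from.

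The paper's proof uses two different partitions. First it partitions $I^D$ (not $K^D$) by degree into $K^D$: the set $I^+$ of vertices with at least $|K^D|/4$ neighbors in $K^D$ satisfies $|I^+|\le 4(\sqrt{2k}+1)$ by double-counting the edges between $K^D$ and $I^+$ against Observation~\ref{obs:small_neighbor_k_i}. Then, to bound $I^-=I^D\setminus I^+$, it partitions $K^D$ by incidence with $F$ (not by $|I_v|$): at most $\sqrt{2k}$ vertices of $K^D$ meet more than $\sqrt{2k}$ edges of $F$, and they account for at most $2k+\sqrt{2k}$ vertices of $I^-$; every other $v\in K^D$ satisfies $|N(v)\cap K^D|\ge |K^D|-\sqrt{2k}\ge |K^D|/2$, and comparing the average degree of $N(v)\cap I^-$ into $N(v)\cap K^D$ from below (at most $k$ non-edges inside $N(v)$, by Rule~\ref{split-k}-\ref{split-k-r2}) and from above (each vertex of $I^-$ has degree at most $|K^D|/4$ into $K^D$) forces $|N(v)\cap I^-|\le 4k/|K^D|$, so these vertices contribute at most $4k$ in total. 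This is the argument your sketch is missing: the budget $|F|\le k$ enters through the classification of $K^D$ by $F$-incidence, and the rule-based non-edge bound enters through an average-degree comparison that is only available once the high-degree part $I^+$ has been removed. (The case $|K^D|\le\sqrt{8k}$ is handled separately by the naive union bound, which is then already linear.)
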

\begin{proof}
	First, notice that Observation~\ref{obs:no_isolated} 
	implies that $I^D \subseteq \bigcup_{v\in K^D} N(v)$.	
	Therefore, if $|K^D|\leq \sqrt{8k}$, 
	Observation~\ref{obs:small_neighbor_k_i}
	implies the following upper bound on the cardinality of $I^D$:
	
	$$|I^D|\leq |K^D|\cdot(\sqrt{2k}+1) \leq 4k + 2\sqrt{2k} \leq 6k + 5\sqrt{2k} + 4.$$

	In what follows, we assume that $|K^D| > \sqrt{8k}$. We partition $I^D$ into two sets: $I^+$, the set of vertices 
	that have degree at least $|K^D|/4$, 
	i.e. vertices that are adjacent to at least $|K^D|/4$ vertices of $K^D$,
	and $I^- = I^D \setminus I^+$. We bound their sizes independently.

	First, by counting the number $n_e$ of edges between $K^D$ and $I^+$ 
	from the point of view of $K^D$, we get $n_e \leq |K^D|\cdot(\sqrt{2k}+1)$. 
	From the point of view of $I^+$, we get $n_e \geq |K^D|\cdot|I^+|/4$.
	By combining the two inequalities, we get $|I^+| \leq 4(\sqrt{2k}+1)$. 

        It remains to show that $|I^-|\leqslant 6k+\sqrt{2k}$. To this
        end, we consider two types of vertices in $K^D$: those that
        are adjacent to more than $\sqrt{2k}$ edges of $F$ in the solution, and the
        others. We then bound the number of vertices in $I^-$ adjacent
        to (at least) a vertex of each type.

        Since we add at most $k$ edges to $G$, there are at most
        $\sqrt{2k}$ vertices in $K^D$ incident to more than $\sqrt{2k}$ edges of $F$.  
        By Observation~\ref{obs:small_neighbor_k_i}, these vertices of $K^D$ have
        at most $\sqrt{2k}(\sqrt{2k}+1) \leq 2k + \sqrt{2k}$ neighbors
        in $I^D$ (and therefore in $I^-$).

        To conclude the proof, it is thus sufficient to show that there are
        at most $4k$ vertices in $I^-$ that are adjacent to vertices of $K^D$
        of the second type.

        Let $v$ be a vertex of $K^D$ of the second type. We write
        $K_v=N(v)\cap K^D$ and $I^-_v = N(v)\cap I^-$. Observe that,
        by definition, $|K_v|\geqslant |K^D| - \sqrt{2k}
        \geq|K^D|/2$. Let $\bar{d}$ be the average degree in
        $K_v$ of vertices in $I^-_v$. Since
        Rule~\ref{split-k}-\ref{split-k-r2} cannot be applied, there
        are at least $|K_v|\cdot|I^-_v| - k$ edges between $K_v$ and
        $I^-_v$, hence $\bar{d}\geqslant |K_v|-k/|I^-_v|\geqslant |K^D|/2 - k
        /|I^-_v|$. However, by definition of $I^-$, each vertex has
        degree at most $|K^D|/4$ in $K_v$ hence $\bar{d}\leqslant |K^D|/4$. Combining the above yields
        $|I^-_v|\leqslant 4k/|K^D|$. Since there are at most $|K^D|$
        vertices of the second type, the union of their neighborhoods has size at most $|K^D|\cdot 4k/|K^D|=4k$, which is the sought result.
\end{proof}

\section{Quadratic kernel for addition towards trivially perfect graphs}\label{sec:trivially}

The goal of this section is to prove Theorem~\ref{thm:tpadd}.
Recall that trivially perfect graphs are $(C_4,P_4)$-free graphs.
In what follows, we refer to induced $P_4$ or $C_4$ of a graph as its \emph{obstructions}. We say that a pair $(u,v)$ of vertices is a \emph{diagonal} if $uv \notin E$ and there exists two vertices $a,b$ such that $uavb$ is a $P_4$ or a $C_4$.
Given a diagonal $(u,v)$, the number of obstructions containing $(u,v)$ is the number of distinct pairs $a,b$ such that $uavb$ is a $P_4$ or a $C_4$.
Note that every obstruction contains exactly two diagonals and any solution must contain at least one of the two diagonals of each obstruction.

We first present a reduction rule that should be applied exhaustively,
and then two reduction rules that should be applied once.

\begin{reductionrule}\label{tp-reduce}
	Let $u,v$ be two non-adjacent vertices. If the number of obstructions containing $u,v$ is at least $k+1$, then add $uv$ to $E$ and decrease $k$ by 1.
\end{reductionrule}

\begin{restatable}[]{lemma}{tpreduce}
	Rule~\ref{tp-reduce} is safe.
\end{restatable}

Moreover, Rule~\ref{tp-reduce} can easily be applied in polynomial time.

The \emph{modulator $X(G)$} of $G$ is the subset of vertices of $G$ that are in at least one obstruction.
Guo~\cite[Theorem 4]{guo2007problem} stated that $(G,k)$ is a positive instance if and only if $(G[X(G)], k)$ is.
Therefore, the following reduction rule is safe:

\begin{reductionrule}\label{tp-modulator}
	If $X(G) \neq V(G)$, remove all vertices of $G$ that are not in $X(G)$.
\end{reductionrule}

When the first two rules cannot be applied, we perform the following rule which detects trivially negative instances.
\begin{reductionrule}\label{tp-size}
	If $|V(G)| > 2k^2+2k$, output a trivially negative instance.
\end{reductionrule}

To complete our proof, we simply have to prove that after applying the first two rules exhaustively, 
the size of a positive instance is quadratic. 
The next lemma ensures that Rule~\ref{tp-size} is safe, which concludes the proof of Theorem~\ref{thm:tpadd}.

\begin{restatable}[]{lemma}{tpbound}
	If $(G,k)$ is a positive instance and every diagonal belongs to at most $k$ obstructions,
	then $|X(G)| \leq 2k^2+2k$.
\end{restatable}

\section{Linear kernel for Starforest deletion}\label{sec:star}

The goal of this section is to prove Theorem~\ref{thm:star}.
Stars can be divided into two sets: centers and leaves. Let us define the notion of \emph{center set} of a star forest.

\begin{definition}[Center set]
	Let $\Sbb$ be a star-forest. A set $C^* \subseteq V(\Sbb)$
	is a \emph{center set} of $\Sbb$ if $C^*$ is a dominating set of $\Sbb$
	such that every star $S$ of $\Sbb$ contains exactly one vertex $c$ of $C^*$. 
	This vertex is called the center of $S$.
\end{definition}

Note that a center set is not necessarily unique since, in 2-stars, both vertices can be selected as a center.
Given a star forest $\Sbb$ with a set of centers $C^*$, the \emph{leaves} of $\Sbb$ are the vertices outside of $C^*$. By definition, every leaf has degree $1$ and its unique neighbor is in~$C^*$.

In what follows, we show how to use the structure of the input graph to identify and label vertices that are centers of an optimal solution, which leads to a \lnr{} kernelization algorithm.

Let $(G,k)$ be an instance of \stardel. 
Our first reduction rule, which is indeed safe, removes trivial connected components.

\begin{reductionrule}[Clean-up rule]\label{star-red-1}
	Remove from $G$ any connected component with $1$ or $2$ vertices.
\end{reductionrule}

Assume now that Rule~\ref{star-red-1} cannot be applied anymore.



\begin{reductionrule}[Center labeling rule]\label{star-center}
	Let $C$ be the set of vertices of $G$ that are adjacent to a vertex of degree $1$ in $G$.
	\begin{alphaenumerate}
		\item\label{star-center-r1} For every $v\notin C$, if $v$ is adjacent to a vertex $u$ of $C$, delete all the other edges between $v$ and $C$, and decrease the parameter accordingly. 
		\item\label{star-center-r2} For every $u,v\in C$, if $u$ and $v$ are adjacent then remove $uv$ from $G$ and decrease $k$ by $1$.
	\end{alphaenumerate}	
\end{reductionrule}

The fact that Rule~\ref{star-center} is safe is a consequence of the following lemma:
\begin{restatable}[]{lemma}{starcenters}\label{lemma:star-centers}
	Let $C$ be the set of vertices of $G$ that are adjacent to a vertex of degree $1$.
	If $(G,k)$ is a positive instance,
	then there exists a solution $F$ of $(G,k)$ and a center-set
	$C^*$ of $G-F$ such that $C \subseteq C^*$.
\end{restatable}
      
Lemma~\ref{lemma:star-centers} ensures that Rule~\ref{star-center} is safe. Indeed, if there exists a solution, then there is also a solution where $C$ is in the center-set.
Hence we can safely remove all the edges between the vertices of $C$ since each star only contains one vertex of the center-set of $G-F$.
Moreover, if an edge between $v$ and a vertex $w$ of $C$ is kept in $G-F$, 
then we can choose to keep any other edge between $v$ and $C$ instead of $vw$,
since all the vertices of $C$ are centers of their stars.

When neither Rule~\ref{star-red-1} nor Rule~\ref{star-center} can be applied, we apply the following rule:

\begin{reductionrule}[Center reduction rule]\label{star-center-reduce}
	Merge all the vertices of $C$, and remove all but $k+2$ vertices of degree 1. 
\end{reductionrule}

\begin{restatable}[]{lemma}{starcenterreduce}
  Rule~\ref{star-center-reduce} is safe. 
\end{restatable}

When Rules~\ref{star-red-1} to~\ref{star-center-reduce} cannot be applied, we apply the following rule once.

\begin{reductionrule}[Kernel size rule]\label{star-reject}
	If $|V(G)| > 4k+3$, return a trivial negative instance (e.g. $(P_4, 0)$).
	Otherwise, return $(G,k)$.
\end{reductionrule}

Rule~\ref{star-reject} ensures that the returned kernel has at most $4k+3$ vertices. 
In the remainder of this section, we study the structure of positive instances of \stardel{} 
to prove that Rule~\ref{star-reject} is safe.

In the two following lemmas, we assume that none of Rules~\ref{star-red-1} to~\ref{star-center-reduce} can be applied.
The following lemma uses the sparsity of starforests (they have many vertices of degree 1) to get information on the structure of positive instances.

\begin{restatable}[]{lemma}{manydegone}\label{lemma:many_deg_one}
	If $(G,k)$ is a positive instance of \stardel{} with $m$ edges, then $G$ contains at least $m-3k$ vertices of degree $1$.
\end{restatable}

In the last step of Rule~\ref{star-center}, we remove all but $k+2$ vertices of degree 1.
In the following lemma, we apply Lemma~\ref{lemma:many_deg_one} to show that the number of remaining vertices must be small.

\begin{restatable}[]{lemma}{starsmall}\label{lemma:star-small}
	If $(G,k)$ is a positive instance where no rule can be applied, then 
	$|V(G)| \leq 4k+3$.
\end{restatable}

By applying the contrapositive of Lemma~\ref{lemma:star-small},
we get that Rule~\ref{star-reject} is safe.

\subparagraph*{Improving the multiplicative constant in the linear bound.}
In the proof of Lemma~\ref{lemma:star-small}, we use a simple argument based on the minimum degree to show that the $3k$ remaining edges span at most $3k+1$ vertices. 
The worst case is when every vertex has degree $2$, that is, when every connected component is a cycle.
In a cycle, an optimal solution can easily be found in polynomial time, and therefore we can
remove cycles.
We can also show that long induced paths can be reduced. 
Combining these results gives a smaller kernel, at the cost of an increased running time and a slightly more involved analysis. 

However, these improvements do not yield a sublinear kernel.
It turns out that, under the Exponential Time Hypothesis, \stardel{} does not have a sublinear kernel.
Indeed, Drange et al.~\cite{drange2015fast} proved that, under ETH, \stardel{} does not admit a subexponential FPT algorithm,
i.e. an algorithm running in time $O^*(2^{o(k)})$.
Moreover, there is an $O^*(2^{n})$ algorithm for \stardel{}: for each subset $S$ of vertices, 
test whether there exists a solution in which $S$ is the center set.
Therefore, a kernel with $o(k)$ vertices would imply an $O^*(2^{o(k)})$ algorithm; a contradiction.

\bibliography{biblio.bib}

\begin{thebibliography}{10}

\bibitem{bansal2004correlation}
Nikhil Bansal, Avrim Blum, and Shuchi Chawla.
\newblock Correlation clustering.
\newblock {\em Machine learning}, 56(1):89--113, 2004.

\bibitem{ben1999clustering}
Amir Ben-Dor, Ron Shamir, and Zohar Yakhini.
\newblock Clustering gene expression patterns.
\newblock {\em Journal of computational biology}, 6(3-4):281--297, 1999.

\bibitem{bliznets2016lower}
Ivan Bliznets, Marek Cygan, Pawel Komosa, Luk{\'a}{\v{s}} Mach, and Micha{\l}
  Pilipczuk.
\newblock Lower bounds for the parameterized complexity of minimum fill-in and
  other completion problems.
\newblock In {\em Proceedings of the Twenty-Seventh Annual ACM-SIAM Symposium
  on Discrete Algorithms}, pages 1132--1151. SIAM, 2016.

\bibitem{bocker2012golden}
Sebastian B{\"o}cker.
\newblock A golden ratio parameterized algorithm for cluster editing.
\newblock {\em Journal of Discrete Algorithms}, 16:79--89, 2012.

\bibitem{burzyn2006np}
Pablo Burzyn, Flavia Bonomo, and Guillermo Dur{\'a}n.
\newblock Np-completeness results for edge modification problems.
\newblock {\em Discrete Applied Mathematics}, 154(13):1824--1844, 2006.

\bibitem{cai1996fixed}
Leizhen Cai.
\newblock Fixed-parameter tractability of graph modification problems for
  hereditary properties.
\newblock {\em Information Processing Letters}, 58(4):171--176, 1996.

\bibitem{cao2012cluster}
Yixin Cao and Jianer Chen.
\newblock Cluster editing: Kernelization based on edge cuts.
\newblock {\em Algorithmica}, 64(1):152--169, 2012.

\bibitem{cao2021improved}
Yixin Cao and Yuping Ke.
\newblock Improved kernels for edge modification problems.
\newblock {\em arXiv preprint arXiv:2104.14510}, 2021.

\bibitem{chen20122k}
Jianer Chen and Jie Meng.
\newblock A 2k kernel for the cluster editing problem.
\newblock {\em Journal of Computer and System Sciences}, 78(1):211--220, 2012.

\bibitem{crespelle2020survey}
Christophe Crespelle, P{\aa}l~Gr{\o}n{\aa}s Drange, Fedor~V Fomin, and Petr~A
  Golovach.
\newblock A survey of parameterized algorithms and the complexity of edge
  modification.
\newblock {\em arXiv preprint arXiv:2001.06867}, 2020.

\bibitem{cygan2015parameterized}
Marek Cygan, Fedor~V Fomin, {\L}ukasz Kowalik, Daniel Lokshtanov, D{\'a}niel
  Marx, Marcin Pilipczuk, Micha{\l} Pilipczuk, and Saket Saurabh.
\newblock {\em Parameterized algorithms}, volume~5.
\newblock Springer, 2015.

\bibitem{damaschke2014editing}
Peter Damaschke and Olof Mogren.
\newblock Editing the simplest graphs.
\newblock In {\em International Workshop on Algorithms and Computation}, pages
  249--260. Springer, 2014.

\bibitem{drange2015exploring}
P{\aa}l~Gr{\o}n{\aa}s Drange, Fedor~V Fomin, Micha{\l} Pilipczuk, and Yngve
  Villanger.
\newblock Exploring the subexponential complexity of completion problems.
\newblock {\em ACM Transactions on Computation Theory (TOCT)}, 7(4):1--38,
  2015.

\bibitem{drange2018polynomial}
P{\aa}l~Gr{\o}n{\aa}s Drange and Micha{\l} Pilipczuk.
\newblock A polynomial kernel for trivially perfect editing.
\newblock {\em Algorithmica}, 80(12):3481--3524, 2018.

\bibitem{drange2015fast}
P{\aa}l~Gr{\o}n{\aa}s Drange, Felix Reidl, Fernando~S{\'a}nchez Villaamil, and
  Somnath Sikdar.
\newblock Fast biclustering by dual parameterization.
\newblock {\em arXiv preprint arXiv:1507.08158}, 2015.

\bibitem{DumasPT21+}
Mael Dumas, Anthony Perez, and Ioan Todinca.
\newblock A cubic kernel for trivially perfect edition.
\newblock {\em private communication}.

\bibitem{fomin2014tight}
Fedor~V Fomin, Stefan Kratsch, Marcin Pilipczuk, Micha{\l} Pilipczuk, and Yngve
  Villanger.
\newblock Tight bounds for parameterized complexity of cluster editing with a
  small number of clusters.
\newblock {\em Journal of Computer and System Sciences}, 80(7):1430--1447,
  2014.

\bibitem{fomin2019kernelization}
Fedor~V Fomin, Daniel Lokshtanov, Saket Saurabh, and Meirav Zehavi.
\newblock {\em Kernelization: theory of parameterized preprocessing}.
\newblock Cambridge University Press, 2019.

\bibitem{ghosh2015faster}
Esha Ghosh, Sudeshna Kolay, Mrinal Kumar, Pranabendu Misra, Fahad Panolan,
  Ashutosh Rai, and MS~Ramanujan.
\newblock Faster parameterized algorithms for deletion to split graphs.
\newblock {\em Algorithmica}, 71(4):989--1006, 2015.

\bibitem{gramm2005graph}
Jens Gramm, Jiong Guo, Falk H{\"u}ffner, and Rolf Niedermeier.
\newblock Graph-modeled data clustering: Exact algorithms for clique
  generation.
\newblock {\em Theory of Computing Systems}, 38(4):373--392, 2005.

\bibitem{guo2007problem}
Jiong Guo.
\newblock Problem kernels for np-complete edge deletion problems: Split and
  related graphs.
\newblock In {\em International Symposium on Algorithms and Computation}, pages
  915--926. Springer, 2007.

\bibitem{hammer1981splittance}
Peter~L Hammer and Bruno Simeone.
\newblock The splittance of a graph.
\newblock {\em Combinatorica}, 1(3):275--284, 1981.

\bibitem{impagliazzo2001complexity}
Russell Impagliazzo and Ramamohan Paturi.
\newblock On the complexity of k-sat.
\newblock {\em Journal of Computer and System Sciences}, 62(2):367--375, 2001.

\bibitem{liu2015edge}
Yunlong Liu, Jianxin Wang, Jie You, Jianer Chen, and Yixin Cao.
\newblock Edge deletion problems: Branching facilitated by modular
  decomposition.
\newblock {\em Theoretical Computer Science}, 573:63--70, 2015.

\bibitem{mancini2008graph}
Federico Mancini.
\newblock Graph modification problems related to graph classes.
\newblock {\em PhD degree dissertation, University of Bergen Norway}, 2, 2008.

\bibitem{natanzon2001complexity}
Assaf Natanzon, Ron Shamir, and Roded Sharan.
\newblock Complexity classification of some edge modification problems.
\newblock {\em Discrete Applied Mathematics}, 113(1):109--128, 2001.

\bibitem{van2018parameterizing}
Ren{\'e} van Bevern, Vincent Froese, and Christian Komusiewicz.
\newblock Parameterizing edge modification problems above lower bounds.
\newblock {\em Theory of Computing Systems}, 62(3):739--770, 2018.

\bibitem{wu1993optimal}
Zhenyu Wu and Richard Leahy.
\newblock An optimal graph theoretic approach to data clustering: Theory and
  its application to image segmentation.
\newblock {\em IEEE transactions on pattern analysis and machine intelligence},
  15(11):1101--1113, 1993.

\bibitem{yannakakis1978node}
Mihalis Yannakakis.
\newblock Node-and edge-deletion np-complete problems.
\newblock In {\em Proceedings of the tenth annual ACM symposium on Theory of
  computing}, pages 253--264, 1978.

\end{thebibliography}

\newpage
\appendix
\section{Proofs for Section~\ref{sec:cliqueis}}

\cliqueisolfirst*

\begin{proof}
Note that removing an edge from $G$ reduces by one the degree of two
vertices.  Therefore, if $\delta(G) \geq k/(2\log k)$, then for every
solution $F$, the independent set part $I$ in $G-F$ contains at most
$4\log k$ vertices, since $F$ contains at most $k$ edges.  Moreover,
Damaschke and Mogren showed in~\cite[Lemma 5]{damaschke2014editing}
that a set $F$ is an \emph{optimal solution} of $G$ 
if and only if $I$ is a minimum
vertex cover of $\bar{G}$, where $I$ is
the independent set part of $G-F$.  In our case, this means that
the minimum vertex cover of $\bar{G}$ has size at most
$4 \log k$. 
As shown below, we can find such a minimum vertex cover 
in polynomial time in our case 
by combining an approximation algorithm with an exhaustive search on the
approximate solution.

Using a greedy 2-approximation algorithm for \vc{}, we can compute a
vertex cover $J$ of $\bar{G}$ such that $I\subseteq J$ and
$|J| \leq 2\cdot |I|$.  If $|J| > 8\log k$ then $|I| > 4\log k$, hence
the instance is negative. Otherwise, we can check if one of the $k^8$
subsets of $J$ is the set of isolated vertices of a
solution. Rule~\ref{clique-highdensity} can therefore be performed in
polynomial time. (This $k^8$ bound can actually be improved to
$k^{4(1+\varepsilon)}$ by using a PTAS for \vc{}~instead.) 
\end{proof}

\cliqueisolsecond*

\begin{proof}
  Observe that due to Rules~\ref{clique-lowdegree}
  and~\ref{clique-highdensity}, we have
  $\sqrt{2(m-k)}-1\leqslant \delta(G) < \frac{k}{2\log k}$, which
  implies that $m\leqslant \frac{k^2}{2\log^2 k}$ when $k> 257$. Note
  that we can assume w.l.o.g. that this last condition is satisfied
  since otherwise, we can solve the instance in polynomial time by
  taking every subset of $k$ edges as a candidate for $F$; this can be
  done in time $O(n^k)= O(n^{257})$ (note that this exponent can be reduced at the cost of
  increasing the multiplicative constant of our kernel).

  Similarly to Lemma~\ref{lemma:lb-clique}, we can show that the size
  of the clique in $G-F$ is at most $\sqrt{2m}+1 = k/\log k + 1$.
			
  Furthermore, Rule~\ref{clique-logdegree} ensures that
  $\delta(G)\geqslant 2\log k$. Removing an edge from $G$ reduces by
  one the degree of two vertices.  Therefore, by removing $k$ edges
  from $G$, we can make at most $k/\log k$ vertices isolated.
  Since all the vertices of $G$
  are either in the clique or isolated, $G$ has at most
  $k/\log k + 1 + k/\log k = 2k/\log k + 1$ vertices.
\end{proof}

\section{Proofs for Section~\ref{sec:split}}
\splitfirst*

\begin{proof}
  Let $(G',k',P')$ be the instance obtained by applying one of the
  rules to an instance $(G,k,P)$. We prove that $(G,k,P)$ is a
  positive generalized instance if and only if $(G',k',P')$ is. We
  first prove the direct implication.  By hypothesis, there exists a
  solution $F$ of $(G,k)$ such that $P \vDash F$; and let $(K^*,I^*)$
  be a split decomposition of $G+F$ witnessing that $P$ is compatible
  with $F$. We write $P'=(K',I',D')$ and $P=(K,I,D)$ and we consider
  three cases depending on the type of the applied rule.
	\begin{itemize}

	\item Rule~\ref{split-is}. Assume that 
	some vertex $v\in D$ is moved to $I'$.  If $v\in I^*$, then
	$I' = I\cup \{v\} \subseteq I^*$ and $K' = K \subseteq K^*$,
	hence $P' \vDash F$. Therefore, we can assume that
	$v\in K^*$.

	If $v$ is moved to $I'$ in application of Rule~\ref{split-is}-\ref{split-is-r1}, 
	then, for every neighbor $w$ of $v$ in $G+F$, either $w \in K$ or $vw \in F$.
	Removing the edges from $F$ incident to $v$
	provides a smaller solution $F'$ such that $G'+F'$ is a split graph, and all the neighbors of $v$ in $G'+F'$
	are in the clique. 
	Therefore, $G'$ has a split decomposition where $v$ is in the independent set, i.e. $P'\vDash F'$.
	
	If $v$ is moved to $I'$ in application of Rule~\ref{split-is}-\ref{split-is-r2},
	then $v$ is non-adjacent to at least $k+1$ vertices of
	$K \subseteq K^*$. Since $K^*$ is a clique in $G+F$,
	all the edges between $v$ and $K$ must be added and then
	$F$ must contain at
	least $k+1$ edges. This is a contradiction, as $|F| \leq k$.

	\item Rule~\ref{split-k}. Assume
	that 
	some vertex $v\in D$ is moved to $K'$. If $v\in K^*$, then
	$K' = K\cup \{v\} \subseteq K^*$ and $I' = I \subseteq I^*$,
	hence $P' \vDash F$. Therefore, we can assume that $v\in I^*$.

	If $v$ is moved to $K'$ in application of Rule~\ref{split-k}-\ref{split-k-r1}, then
	$v$ has a neighbor $u$ which belongs to $I \subseteq I^*$. Therefore,
	there is an edge $uv$ between two vertices that must belong to the independent set. 
	As we can only add edges, this is a contradiction.
	
	If $v$ is moved to $K'$ in application of Rule~\ref{split-k}-\ref{split-k-r2}, then
	there are at least $k+1$ non-edges in the graph induced by $N(v)$. Since $v\in I^*$ 
	and $I^*$ is an independent set, $N(v) \subseteq K^*$ must be in $K^*$.
	Since $K^*$ is a clique in $G+F$, 
	$F$ must contain the $k+1$ missing edges
	of $N(v)$, a contradiction with $|F| \leq k$.

	If $v$ is moved to $K'$ in application of Rule~\ref{split-k}-\ref{split-k-r3}, 
	then $v$ is adjacent to
	$K \cup D \supseteq K^*$. Observe that
	$(K^* \cup \{v\}, I^* \setminus \{v\})$ is a split decomposition of
	$G'+F$.	Moreover, since $K \subseteq K^*$ and
	$K' = K \cup \{v\}$, $K'$ is a subset of $K^* \cup \{v\}$, hence
	$P' \vDash F$.	Therefore, $F$ is a solution of $(G',k')$ and
	$P' \vDash F$, as desired.

	\item Rule~\ref{split-red}. Assume
	that Rule~\ref{split-red}-\ref{split-red-r1} was applied,
	i.e. some edge $e$ has been added (the other cases are trivial since the 
	instance $(G,k,P)$ cannot be positive). No vertex has moved, therefore
	we have $P' = P$, and $k' = k - 1$. Since $K\subset K^*$ and
	the endpoints of $e$ lie in $K$, $e$ is an edge in $G+F$ but
	not in $G$, hence $e\in F$.  Let $F' = F \setminus \{e\}$. By
	construction, $F'$ is a solution of $(G', k')$. Since
	$G + F = G' + F'$, $(K^*,I^*)$ is a split decomposition of
	$G'+F'$, and therefore $P' \vDash F'$.
	\end{itemize}

	We now prove the converse: assume that $(G',k',P')$ is positive.	
	In the case of Rule~\ref{split-is} and Rule~\ref{split-k},
	remark that any solution of $(G', k')$ is a solution of $(G, k)$ since the instance is unchanged.
	Hence, if $P'$ is compatible with a solution $F'$ of $(G',k')$,
	then $P$ is also compatible with $F'$, as $K\subseteq K'$ and $I \subseteq I'$.
	Therefore, we have a solution $F'$ of $(G,k)$ that $P$ respects, i.e. $(G,k,P)$ is positive.

	The case of Rule~\ref{split-red} follows by reversing the construction done in 
	the same case of the other direction.
\end{proof}

\splitsecond*

\begin{proof}
	Let $(G,k,P)$
	be the generalized instance before Rule~\ref{split-unlabel} is applied, and let $(G',k')$ be
	the instance that it returns. Let us prove that there exists a solution $F$ of	$(G,k)$ such that $P\vDash F$ if and only if $(G',k')$ has a solution.

	First, assume that there exists a solution $F$ of $(G,k)$ such that $P\vDash F$.
	Let $(K^*, I^*)$ be a split decomposition of $G+F$ that witnesses the fact
	that $P\vDash F$. Let $K^D = K^* \cap D$ and $I^D = I^*\cap D$.
	We can extract from $F$ a solution $F'$ of $(G, k)$ 
	that only adds edges between vertices of $K^*$.
	Notice that, by construction, $K^* = K \sqcup K^D$. 

	Recall that $(G',k')$ is obtained by replacing the subset of vertices $K$ in $G$ with 
	a set $K'$ of $k$ vertices.
	Moreover, by construction, 
	vertices of $K^D$ that are non-adjacent to $t$ vertices of $K$ in $G$
	are non-adjacent to $t$ vertices of $K'$ in $G'$.
	Rule~\ref{split-is}-\ref{split-is-r2} ensures that $t\leq k$, and
	Rule~\ref{split-red}-\ref{split-red-r1} ensures that $K$ induces a clique in $G$.
	Therefore, since $K'$ induces a clique in $G'$,
	the number of edges needed to turn $K^D \cup K$ into a clique 
	is the same as the number of edges needed to turn $K^D \cup K'$ into a clique.

	For every $u\in K^D$, let $uv_1,\ldots, uv_t$ be all the edges
    in $F$ that are adjacent to $u$ and some vertex $v_i \in K$.
    We construct a solution $F'$ of $(G',k')$ by replacing every
    such edge $uv_i$ of $F$ by $uv_i'$, where $v_i'\in K'$ is
    defined in Rule~\ref{split-red}-\ref{split-red-r2}. Since
    $k' = k$, $(G',k')$ is also a positive instance.
\smallskip

	Conversely, assume that $(G',k')$ has a solution $F'$.
	Since every vertex of $K'$ is adjacent to every vertex of $I'$,
	for any split decomposition of $(K^*,I^*)$ of $G'+F'$, we have
	$K'\subseteq K^*$. Otherwise, if some vertex of $K'$ were in $I^*$,
	$I'$ would be contained in the clique of $G'+F'$,
	which would imply that $F'$ contains more than $k'$ edges, a contradiction.
	Indeed, $I'$ is an independent set of size $\sqrt{2k'}$, and therefore, one requires more than $k'$ edge additions
	to turn it into a clique.
	As vertices of $I'$ are only adjacent to vertices of $K'$, 
	up to removing some edges from $F'$,
	we can assume that $I' \subseteq I^*$.

	Hence, by performing the same process as in the other direction in reverse,
	we can construct a solution $F$ of $(G,k)$ such that $G+F$
	admits a clique decomposition $(A,B)$ that satisfies $K\subseteq A$
	and $I\subseteq B$. In other words, we have $P\vDash F$, which concludes the proof.
\end{proof}

\firstobs*

\begin{proof}
	If $|I_v| > \sqrt{2k}+1$, 
	then $N(v)$ contains more than $\binom{\sqrt{2k}+1}{2} = k + \sqrt{k/2}$ non-edges.
	Hence, we can apply Rule~\ref{split-k}-\ref{split-k-r2}, a contradiction. 
\end{proof}

\secondobs*

\begin{proof}
	A vertex $v\in I^D$ can only have neighbors in $K$ and $K^D$.
	If $v$ does not have neighbors in $K^D$, it only has neighbors in $K$, 
	hence we can apply Rule~\ref{split-is}-\ref{split-is-r1}, a contradiction.
\end{proof}

\section{Proofs for Section~\ref{sec:trivially}}
\tpreduce*

\begin{proof}
If there is no solution, then indeed, the reduced instance still has no solution. Assume now that $(G,k)$ is positive. Let $F$ be a solution of $(G,k)$. If $F$ contains $uv$ the conclusion follows. Assume by contradiction that $uv$ is not in $F$. Since every obstruction contains exactly two diagonals, for every obstruction containing $u,v$ as a diagonal, the other diagonal is in $F$. Since the other diagonal consists of the other pair of vertices of the obstruction and there are at least $k+1$ disjoint pairs, $F$ must contain at least $k+1$ edges, a contradiction.
\end{proof}

\tpbound*

\begin{proof}
	Since $(G,k)$ is a positive instance, there exists a solution $F$ containing at most $k$ edges. 
	Moreover every edge $e$ of $F$ belongs to at most $k$ obstructions in $G$. 
	Therefore, the number of vertices that are in an obstruction containing $e$ as a diagonal is at most $2k+2$ (since all the obstructions contain the endpoints of $e$). 
	
	Let $x$ be a vertex of $X(G)$. 
	Since all the vertices of $X(G)$ belong to at least one obstruction of $G$ and $F$ is a solution, 
	$F$ contains at least one of the two diagonals of some obstruction containing $x$. 
	Therefore, we can map each vertex of $X(G)$ to an edge of $F$ that is a diagonal of an obstruction containing $x$. 
	The first part of the proof ensures that the number of vertices mapped to any edge of $F$ is at most $2k+2$, 
	therefore the total size of $X(G)$ is at most $2k^2+2k$, which completes the proof.
\end{proof}

\section{Proofs for Section~\ref{sec:star}}
\starcenters*

\begin{proof}
	Let $F$ be a solution of $(G,k)$ that maximizes the number of vertices of $C$ in a center-set $C^*$ of $G-F$.
	Let $\Sbb = G-F$. Assume by contradiction that $C$ is not included in $C^*$. 
	We prove that there exists a solution $F'$ of $(G,k)$ and a center set $C'$ of $G-F'$ 
	containing more vertices from $C$ than $C^*$, a contradiction.
	
	Let $v \in C\setminus C^*$. Let $w$ be a neighbor of $v$ of degree 1 in $G$. Observe that
	$v$ cannot have degree 1 in $G$, otherwise $\{v,w\}$ induces a $2$-star in $G$. Therefore,
	$v$ has degree at least $2$ in $G$.	However, it has degree 1 in $G-F$ since $v\notin C^*$.
	Notice that $w\in C^*$, as it is either isolated or adjacent to $v$ in $G-F$, and $v\notin C^*$.

	If $w$ is the center of the star containing $v$, that star is reduced to $\{v,w\}$.
	Hence, we can replace
	$w$ with $v$ in $C^*$, which increases the size of $C\cap C^*$, a contradiction.

	Otherwise, let $u \ne w$ be the center of the star of $v$.	We
	set $F'= (F \setminus vw ) \cup uv$ (see Figure~\ref{fig:star-swap}). We have $|F'| \le |F|$
	and $F'$ is still a solution of $(G,k)$.	Moreover, $G-F'$ has a center set $C'$ containing
	$C^* \cap C$ and $v$, which contradicts the maximality of $F$.\qedhere

	\begin{figure}[htbp]
	\begin{center}
	\begin{tikzpicture}
		\node[draw, circle] (v) at (0.5,0)	 {$v$};
		\node[draw, circle] (w) at (2,0)	 {$w$};
		\node[draw, circle] (u) at (-1,0)	 {$u$};
		\node[draw, circle] (a) at (-2,1) {};
		\node[draw, circle] (b) at (-2,0) {};
		\node[draw, circle] (c) at (-2,-1) {};
		\node[] (g) at (0,1) {In $G$};

		\draw[thick] (a) -- (u) -- (b);
		\draw[thick] (c) -- (u);
		\draw[thick] (u) -- (v) -- (w);
	\end{tikzpicture}%

	\begin{tikzpicture}
		\node[draw, circle] (v) at (0.5,0)	 {$v$};
		\node[draw, circle, red] (w) at (2,0)	 {$w$};
		\node[draw, circle, red] (u) at (-1,0)	 {$u$};
		\node[draw, circle] (a) at (-2,1) {};
		\node[draw, circle] (b) at (-2,0) {};
		\node[draw, circle] (c) at (-2,-1) {};

		\draw[thick] (a) -- (u) -- (b);
		\draw[thick] (c) -- (u);
		\draw[thick] (u) -- (v);
		\node[] (g) at (0,-1) {In $G-F$};
	\end{tikzpicture}%
	\hspace{2cm}%
	\begin{tikzpicture}
		\node[draw, circle, red] (v) at (0.5,0)	 {$v$};
		\node[draw, circle] (w) at (2,0)	 {$w$};
		\node[draw, circle, red] (u) at (-1,0)	 {$u$};
		\node[draw, circle] (a) at (-2,1) {};
		\node[draw, circle] (b) at (-2,0) {};
		\node[draw, circle] (c) at (-2,-1) {};

		\draw[thick] (a) -- (u) -- (b);
		\draw[thick] (c) -- (u);
		\draw[thick] (v) -- (w);
		\node[] (g) at (0,-1) {In $G-F'$};
	\end{tikzpicture}%
	\caption{Illustration of the transformation used to make $v$ a center, where $v$ is a vertex adjacent to the degree-1 vertex $w$.
	Vertices of the center set ($C^*$ in $G-F$, $C'$ in $G-F'$) are drawn in red.}
	\label{fig:star-swap}
	\end{center}
	\end{figure}
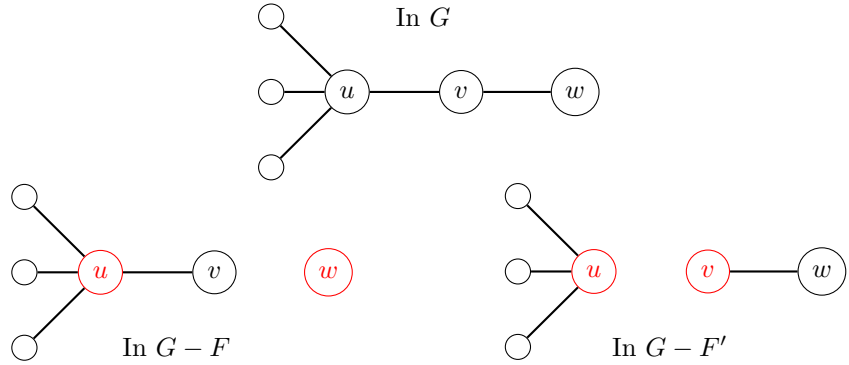
\end{proof}

\starcenterreduce*

\begin{proof}
	Assume that Rule~\ref{star-center} cannot be applied.
	Let $(G_0,k)$ be the instance before Rule~\ref{star-center-reduce} is applied,
	let $(G_1,k)$ be the instance after vertices of $C$ have been merged to the vertex $c$,
	and let $(G_2,k)$ be the instance that Rule~\ref{star-center-reduce} returns.
	Since Rule~\ref{star-center} cannot be applied, there is no edge between vertices of $C$, and each vertex of $G_0$ is adjacent to at most one vertex of $C$. Therefore there are no loops nor parallel edges in $G_1$. 

	The instances $(G_0,k)$ and $(G_1,k)$ are equivalent.
	Indeed, by Lemma~\ref{lemma:star-centers}, if $G_0$ (resp. $G_1$) is positive, 
	there exists a solution $F$ (resp. $F'$) such that $C$ (resp. $\{c \}$) is in a center-set of $G_0-F$ (resp. $G_1-F'$).
	For every vertex $v$ of degree 1 in $G_0$, let $c_v$ be its only neighbor (which is in $C$).
	We can then transform a solution of $F$ of one instance, with a center set that contains $C$ or $c$, into a solution of the other instance by swapping edges of the form $(v,c)$ for edges $(v,c_v)$. This operation does not change the cardinality of the solution, and the instances have the same parameter, therefore the instances are equivalent.
	\medskip

	Let us finally prove that	$(G_1,k)$ is positive if and only if $(G_2,k)$ is. 
	Since $G_2$ is a subgraph of $G_1$, if $(G_1,k)$ is positive then removing the same set of edges in $G_2$ also gives a solution.

	Now, assume that $(G_2,k)$ is positive. Let $F$ be a solution of $(G_2,k)$. 
	W.l.o.g., $c$ is adjacent to $k+2$ leaves in $G_1$ (otherwise $G_1=G_2$, and the equivalence is trivial). 
	As $|F|\le k$, $c$ is still adjacent to at least $2$ leaves in $G_2-F$, and therefore, $c$ is the center of its star. 
	This implies that the same set of edges $F$ is a solution of $G_1$ since $G_1$ is obtained from $G_2$ by adding pendant vertices adjacent to $c$, and $c$ is a center of any solution of $G_2$.
\end{proof}

\manydegone*
\begin{proof}
	Since $(G,k)$ is a positive instance, there exists a set $F \subseteq E$ of size at most $k$ 
	and a starforest $\Sbb$ such that $\Sbb = G - F$. 
	Let $t$ denote the number of stars (i.e. of connected components) in $\Sbb$.
	In each star of $\Sbb$, there is at most one vertex which does not have degree 1.
	Therefore, $\Sbb$ has at least $n-t$ vertices of degree 1.
	Let $\ell$ be the number of vertices in $G$ that have degree 1.
	We can obtain $G$ from $\Sbb$ by adding at most $k$ edges, and
	adding an edge can change the degree of at most two vertices. 
	Therefore, we have $\ell \ge n - t - 2k$.
	Moreover, $\Sbb$ has $n-t$ edges, hence $m \leq n-t+k$.
	By combining the above, we obtain that $\ell \geq n - 2k + m - n -k = m - 3k$.
\end{proof}

\starsmall*
\begin{proof}
	By Lemma~\ref{lemma:many_deg_one}, $G$ contains at least $m-3k$ vertices of degree 1. 
	Since Rule~\ref{star-red-1} cannot be applied, $G$ does not contain 2-stars, and therefore each edge is incident to at most one vertex of degree 1. 
	Hence, removing all vertices of degree 1 from $G$ removes the same number of edges,
	i.e. we obtain a graph $H$ with at most $3k$ edges.
	As all the degree 1 vertices of $G$ adjacent to a single vertex $v$,
	all vertices of $H$ but $v$ have degree at least 2.
	Hence, $H$ contains at most $3k+1$ vertices.
	Moreover, by Rule~\ref{star-center-reduce} $G$ contains at most $k+2$ vertices of degree 1.
	Therefore, we have $|V(G)| \leq 3k +1 + k+2 = 4k+3$.
\end{proof}

\end{document}